\newtheorem{Pro}{\sc Proposition}
\newtheorem{Def}{\sc Definition}
\definecolor{darkgreen}{rgb}{0.0, 0.5, 0.0}
\def\ba{{\bf a}}
\def\bb{{\bf b}}
\def\bu{{\bf u}}
\def\bv{{\bf v}}
\def\bx{{\bf x}}
\def\by{{\bf y}}
\def\bz{{\bf z}}
\def\bW{{\bf W}}
\def\bX{{\bf X}}
\def\boldx{\boldsymbol{x}}
\def\tbz{\tilde{\bz}}
\def\bPsi{\boldsymbol{\Psi}}
\def\bTheta{\boldsymbol{\Theta}}
\def\bPhi{\boldsymbol{\Phi}}
\def\blambda{\boldsymbol{\lambda}}
\def\bmu{\boldsymbol{\mu}}
\def\bnu{\boldsymbol{\nu}}
\def\bxi{\boldsymbol{\xi}}
\def\bgamma{\boldsymbol{\gamma}}
\def\bGamma{\boldsymbol{\Gamma}}
\def\calC{{\cal C}}
\def\calD{{\cal D}}
\def\calN{{\cal N}}
\def\calQ{{\cal Q}}
\def\calO{{\cal O}}
\begin{document}
\title{Compressed Sensor Caching and Collaborative Sparse Data Recovery with Anchor Alignment}
\author{Yi-Jen Yang, Ming-Hsun Yang, Jwo-Yuh Wu, and Y.-W. Peter Hong\thanks{This work was supported in part by the National Science and Technology Council (NSTC) of Taiwan under grants NSTC 111-2221-E-007-042-MY3, 111-2221-E-A49-067-MY3, 111-2222-E-008-003, and 112-2221-E-008-057. A preliminary version of this work was published in ICASSP 2019.}
\thanks{Y.-J.~Yang and Y.-W.~P.~Hong are with the Institute of Communications Engineering, National Tsing Hua University, Hsinchu 30013, Taiwan (e-mail: yjyang@erdos.ee.nthu.edu.tw; ywhong@ee.nthu.edu.tw).}
\thanks{M.-H.~Yang is with the Department of Communication Engineering, National Central University, Taoyuan 320, Taiwan (e-mail: mhyang@cc.ncu.edu.tw).}
\thanks{J.-Y.~Wu is with the Department of Electrical and Computer Engineering, the Institute of Communications Engineering, and the College of Electrical Engineering, National Yang Ming Chiao Tung University, Hsinchu 30010, Taiwan (e-mail: jywu@cc.nctu.edu.tw).}}

\maketitle

\begin{abstract}
This work examines the compressed sensor caching problem in wireless sensor networks and devises efficient distributed sparse data recovery algorithms to enable collaboration among multiple caches. In this problem, each cache is only allowed to access measurements from a small subset of sensors within its vicinity to reduce both cache size and data acquisition overhead. To enable reliable data recovery with limited access to measurements, we propose a distributed sparse data recovery method, called the collaborative sparse recovery by anchor alignment (CoSR-AA) algorithm, where collaboration among caches is enabled by aligning their locally recovered data at a few anchor nodes. The proposed algorithm is based on the consensus alternating direction method of multipliers (ADMM) algorithm but with message exchange that is reduced by considering the proposed anchor alignment strategy. Then, by the deep unfolding of the ADMM iterations, we further propose the Deep CoSR-AA algorithm that can be used to significantly reduce the number of iterations. We obtain a graph neural network architecture where message exchange is done more efficiently by an embedded autoencoder. Simulations are provided to demonstrate the effectiveness of the proposed collaborative recovery algorithms in terms of the improved reconstruction quality and the reduced communication overhead due to anchor alignment.
\end{abstract}

\begin{IEEEkeywords}
Caching, wireless sensor networks, compressed sensing, alternating direction method of multipliers.
\end{IEEEkeywords}

\section{Introduction}\label{sec.intro}


Wireless sensor networks (WSNs) and the internet of things consist of a large number of sensors that are deployed to gather information from the environment \cite{yick_etal_2008}. The information collected must often be forwarded back to the data-gathering node or made easily accessible by the users. Due to the large-scale and wide deployment of sensors, constant query of information directly from the sensors may be inefficient and costly. Therefore, it is necessary to deploy dedicated caches within the network or to better utilize on-board sensor caches to allow easy access to historical sensor data and thus reduce the frequent activation of sensors. However, the large-scale deployment of sensors and their continuous measurement of the environment generate massive data that must be stored efficiently. In this work, we explore the high compressibility of sensor data and the collaborative nature of sensor nodes to improve the effectiveness of sensor caching and data recovery.

Sensor caching problems have been examined extensively in the literature, but mostly from the networking perspective, e.g., in \cite{ying_etal_2008,dimokas_etal_2011,niyato_etal_2016,Yao2018joint, jaber_kacimi_2020, yang_song_2021, zhang_ren_wang_lv_2022, chen_sun_yang_taleb_2022}. 
Specifically,
\cite{ying_etal_2008} considered the optimal placement of caches (or storages) to minimize the total cost of storage, computation, and replying requests. 
\cite{dimokas_etal_2011} proposed the use of cooperative caching among sensors to improve the accessibility of sensor data.
\cite{niyato_etal_2016} utilized caching to improve the efficiency of energy usage in energy harvesting WSNs. Furthermore, \cite{Yao2018joint} examined the joint storage allocation and content placement problem, and proposed a hierarchical cache-enabled C-RAN to handle the huge data traffic generated by IoT sensing services. 
More recently, \cite{jaber_kacimi_2020} proposed a sensor caching strategy for content-centric networking where sensors chosen to cache the data content were determined by taking into account the node centrality and their distances from the content source.
\cite{yang_song_2021} proposed a multihop cooperative caching strategy to find the optimal tradeoff between energy saving and delay reduction by coordinating the in-network caching and the sensors' sleep schedule. \cite{zhang_ren_wang_lv_2022} proposed a joint fractional caching and multicast beamforming scheme to maximize the energy efficiency of multicast transmission with the help of cooperative transmission by in-network caches.
\cite{chen_sun_yang_taleb_2022} considered a joint caching and computing service placement problem and used deep reinforcement learning to adapt to heterogeneous systems with limited prior knowledge. While sensor caching problems are not new, most existing works
examine the problem from a networking perspective, treating sensor data as individual packets that must be delivered separately. 
Unlike these works, we address the problem from a signal processing perspective, taking advantage of the compressibility of sensor measurements to reduce the cost of caching.



Due to the locality and smoothness of many physical phenomena, the measurements collected by densely deployed sensors in a WSN are often highly correlated in both space and time \cite{vuran2004spatio}. These sensor measurements tend to be sparse when viewed under a specific sparsifying basis. Hence, compressed sensing (CS) techniques \cite{candes_wakin_michael_2008} have often been adopted to reduce the cost of data acquisition and recovery in WSNs. In particular, by exploiting correlation in the spatial domain, \cite{haupt08} proposed a general framework for using CS to efficiently compress and deliver networked data. Random projections of local data were transmitted using gossip-type algorithms in a general multihop network.
\cite{luo2009compressive} proposed a compressive data gathering scheme where weighted sums of sensor readings are sent to a common sink node. The advantages in terms of lifetime extension, robustness to abnormal readings, and improved network capacity were demonstrated using real sensor data.
\cite{lee2009spatially} proposed a spatially-localized projection technique based on the clustering of neighboring sensors. 
The impact of spatial overlap between clusters and the choice of sparsifying bases were considered in the cluster design. \cite{li_xu_wang_2013} proposed a CS-based framework for data sampling and acquisition in the internet of things. An efficient cluster-sparse reconstruction algorithm was proposed for in-network compression to improve data reconstruction and energy efficiency.

Moreover, by considering correlation in both spatial and temporal domains,
\cite{leinonen2015sequential} proposed a sequential CS recovery method 
that allows the sink node to efficiently reconstruct the correlated sensor data streams from periodically delivered CS measurements.
The proposed method utilizes estimates from previous time instants to improve the recovery at the current time instant.
\cite{chen_ranieri_zhang_vetterli_2015} also exploited the spatial and temporal characteristics of the underlying signal
to reduce the amount
of collected samples. 
The signal model was learned adaptively from the measurements and used to determine the spatial and temporal sampling of the physical field. 
\cite{Quer2012} combined the use of CS and principal component analysis (PCA) to enable online recovery of large data sets in WSNs. The proposed method uses PCA to capture the essential components in space and time and
self-adapts to changes in the signal
statistics. \cite{li_tao_chen_2018} proposed a CS-based data gathering algorithm that utilizes random sampling and random walks to perform efficient sensor data selection in both temporal and spatial domains.
However, these works do not take advantage of CS for sensor caching or perform collaborative data recovery.



In this work, we examine a compressed sensor caching problem in WSNs and devise distributed sparse data recovery algorithms to enable collaboration among caches.
Here, multiple caches are deployed in a WSN to store sensor measurements within their respective coverage areas over a certain time window. To reduce cache size and data acquisition overhead, we allow each cache to only store measurements from a small subset of sensors within its vicinity.
By exploiting the sparse nature
of the spatially-temporally correlated sensor field and the collaboration among caches, sensor data can be recovered
locally at each cache using a distributed CS-based sparse data recovery method. The main contributions of this work are summarized as follows:
\begin{itemize}
    \item We propose a compressed sensor caching strategy based on the CS framework where only a subset of sensor measurements need to be stored at the local caches to enable data recovery when requested by the users.
    \item We formulate a collaborative sparse data recovery problem that aims to reduce the recovery error (beyond that of separate recovery by each local cache) by exploiting collaboration among caches and thus taking advantage of the measurements gathered by other caches.
    
    \item We propose the collaborative sparse recovery by anchor alignment (CoSR-AA) algorithm that ensures consistency of the local recovery at different caches by aligning their estimates of the data associated with a few randomly selected anchor nodes. The proposed algorithm is based on the consensus alternating direction method of multipliers (ADMM) algorithm \cite{boyd2011distributed}, but utilizes the proposed anchor alignment to significantly reduce the number of message exchanges between caches. 
    \item Moreover, based on the deep unfolding \cite{hersey_roux_weinger_2014} of the ADMM iterations in the original CoSR-AA algorithm, we also propose the Deep CoSR-AA algorithm to further reduce the communication overhead. We obtain a graph neural network architecture where the message exchange is done more efficiently by an embedded autoencoder.

    \item Finally, simulation results are provided to demonstrate the effectiveness of the proposed CoSR-AA and Deep CoSR-AA algorithms versus other baseline data recovery methods in terms of improved reconstruction accuracy and reduced communication overhead.
\end{itemize}
A preliminary version of this work can be found in \cite{yang_yang_hong_wu_ICASSP2019}, where a solution was provided for only the two-cache case. Different from \cite{yang_yang_hong_wu_ICASSP2019}, this work considers the general multi-cache scenario and justifies the use of anchor alignment through theoretical sparsity arguments. Furthermore, we also propose a deep learning implementation to further reduce the message exchange overhead. This work is also related to our recent publication in \cite{chennakesavula_hong_scaglione_TSP2022}, where the focus is on optimizing the linear compression at the caches to enhance distributed parameter estimation (rather than sparse data recovery).

Distributed sparse recovery methods have also been investigated before in the CS literature, utilizing techniques such as iterative hard thresholding \cite{patterson_eldar_keidar_2014}, the quadratic penalty method \cite{borhani_watt_katsaggelos_2017} and the previously mentioned ADMM algorithm. The ADMM algorithm is favorable due to its ability to decompose the problem into multiple subproblems that can be solved in parallel by distributed local agents.
In particular, \cite{Mota2012distributed} proposed a distributed implementation of the ADMM algorithm to solve the optimization problem associated with the conventional basis pursuit. 
\cite{ling_tian_2010} developed a decentralized sparse signal recovery algorithm based on the consensus ADMM algorithm and utilized it to enable a random node sleeping strategy. Each active sensor only monitors and recovers its local region, but collaborates with its neighboring active sensors to iteratively improve its local estimates.
\cite{tian_zhang_hanzo_2023} considered a multi-view CS problem, where each sensor can only obtain a partial view of the global sparse vector. 
The recovery problem was formulated as a bilinear optimization problem based on a factored joint sparsity model and was again solved by an ADMM-based in-network algorithm. 
Different from the above works, where the number of message exchanges typically scales with the ambient dimension of the signal, our proposed method only exchanges data corresponding to anchor nodes, and thus significantly reduces the communication overhead.

Several recent works, e.g., \cite{mousavi_etal_2015,mousavi_etal_2017,palangi_ward_deng_2016,wu_rosca_lillicrap_2019}, have also combined the use of deep learning with CS to improve the encoding and decoding efficiency. In particular, \cite{mousavi_etal_2015} was one of the first works to adopt deep learning for the encoding and decoding of sparse signals. A stacked denoising autoencoder was used to capture statistical dependencies of the signal and improve signal recovery
performance over traditional CS approaches. Similar ideas were examined in \cite{mousavi_etal_2017} using convolutional neural networks.
Moreover, \cite{palangi_ward_deng_2016} considered the multiple measurement vector CS problem and utilized
long short-term memory 
to capture the dependency among vectors (i.e.,
the conditional probability of their non-zero entries). 
\cite{wu_rosca_lillicrap_2019} considered the use of generative models for CS to reduce the sparsity requirement and showed how the reconstruction speed and accuracy can be improved by jointly training a generator and
the reconstruction process through meta-learning.
In addition, several other works have also utilized the deep unfolding of existing iterative optimization algorithms, such as ADMM, to obtain efficient and interpretable neural network models. 
\cite{yang_sun_li_xu_2017} examined the use of CS for fast Magnetic Resonance
Imaging (MRI) and proposed a deep learning architecture to optimize the CS-MRI models based on the unfolding of the ADMM algorithm. \cite{yang_sun_li_xu_2020} 
considered a generalized CS model for image reconstruction and proposed two efficient solvers based on the deep unfolding of the ADMM algorithm for optimizing the model. The effectiveness of the proposed approaches was demonstrated for both MRI and natural images.
Both \cite{yang_sun_li_xu_2017} and \cite{yang_sun_li_xu_2020} enabled the training of ADMM penalty parameters, sparsifying transforms and shrinkage function.
\cite{kouni_paraskevopoulos_rauhut_alexandropoulos_2022} proposed a neural network architecture that unrolled the ADMM iterations for the analysis compressed sensing problem. The proposed model was used to jointly learn a redundant analysis operator for sparsification and reconstruction of the signal support. \cite{fu_monardo_huang_liu_2021} unfolded an iterative shrinkage thresholding algorithm for block-sparse signal recovery by exploiting the block-sparse structure of signals. 
The weighting matrices shared across different layers and the soft-thresholding parameters were both treated as training parameters. 
While the above works successfully demonstrate the effectiveness of using deep learning in sparse data recovery, their proposed methods are centralized and thus may result in large communication overhead when implemented in a distributed setting.
The remainder of this paper is organized as follows. In Section \ref{sec.system_model}, we describe the system model and the proposed compressed sensor caching problem. In Section \ref{sec.proposed_method}, we propose a collaborative sparse data recovery method based on the ADMM algorithm and show how anchor alignment can be used to reduce the communication cost of distributed implementation. 
Then, in Section \ref{sec.deep_unfolding}, we propose an efficient neural network implementation of the collaborative sparse recovery algorithm based on the deep unfolding of the above ADMM iterations. 
Finally, we provide the simulation results in Section \ref{sec.simulations} and conclude in Section \ref{sec.conclusion}. 


\section{System Model} \label{sec.system_model}

\begin{figure}[t]
\centering{\includegraphics[width=1.0\linewidth]{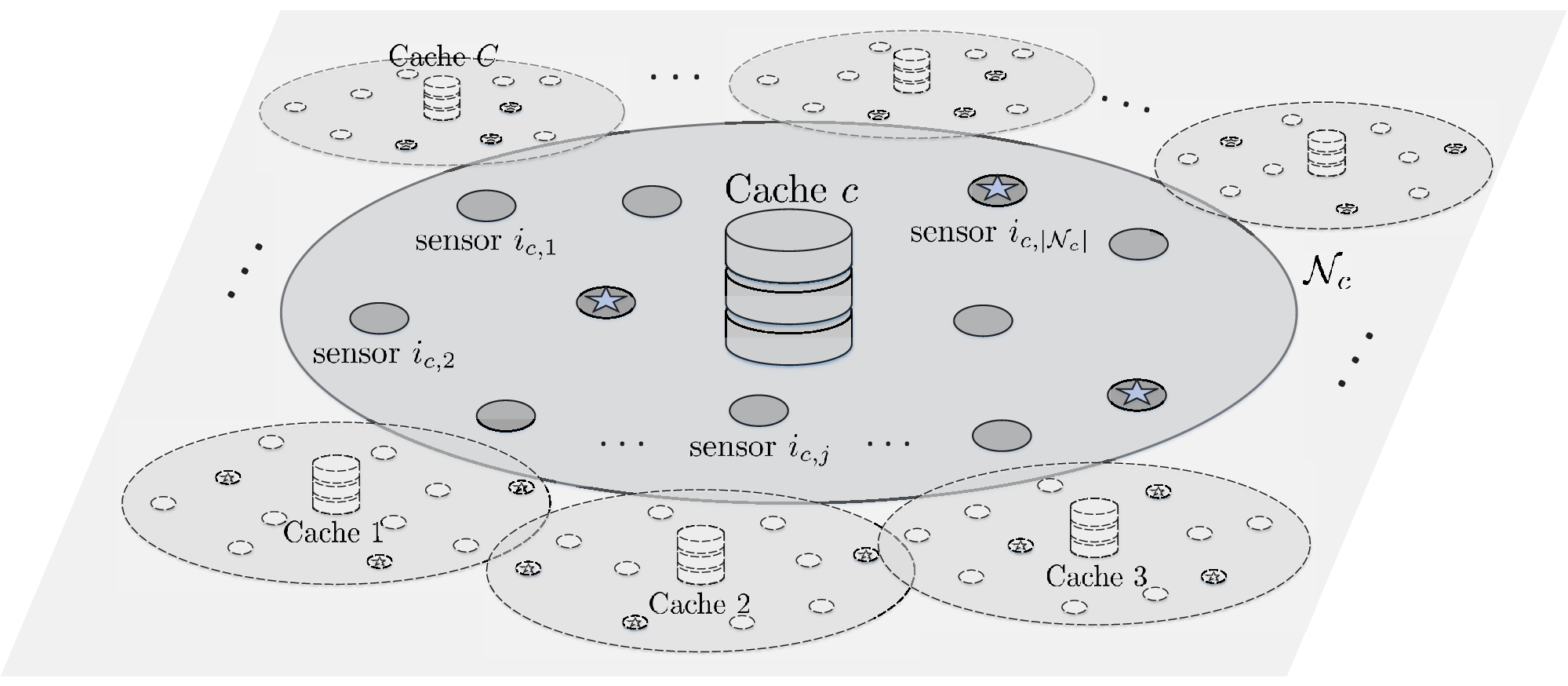}}
\caption{Illustration of the sensor caching scenario.}
\label{fig:model}
\vspace{-.2cm}
\end{figure}

We consider a WSN that consists of $N$ sensor nodes, indexed by the set $\calN=\{1,\ldots, N\}$, that are deployed to monitor the physical environment, as shown in Fig. \ref{fig:model}. The network also consists of $C$ caches, indexed by the set $\calC=\{1,\ldots, C\}$, that are placed to buffer the most recent observations made by the sensors within their vicinity.
The set of sensors accessible by cache $c$ is indexed by $\calN_c\triangleq\{i_{c,1},\ldots, i_{c,|\calN_c|}\}$, where $|\calN_c|$ is the cardinality of the set and $\cup_{c=1}^C\calN_c=\calN$. Note that $\calN_1,\ldots, \calN_C$ are not necessarily disjoint. The above setup allows remote users to access data from the caches instead of directly from the sensors to avoid frequent sensor activation.

Let $x_{n}(t)$, $n = 1, \ldots, N$, be the scalar observation made by sensor $n$ at time $t$. By collecting the observations that are made by all $N$ sensors over the recent $W$ time instants, we obtain the
$N \times W$ matrix
\begin{align}
	\bX(t) &\triangleq
	\begin{bmatrix}
		x_{1}(t-W+1) 	& \cdots	& x_{1}(t)  \\
		\vdots 		& \ddots	& \vdots    \\
		x_{N}(t-W+1)	& \cdots	& x_{N}(t)
	\end{bmatrix}\\
    &= \begin{bmatrix}
    	\underline{\boldx}(t-W+1), \ldots, \underline{\boldx}(t)
    \end{bmatrix}\\
    &= \begin{bmatrix}
    	\boldx_1(t), \ldots, \boldx_{N}(t)
    \end{bmatrix}^T,
\end{align}
where 
$\underline{\boldx}(t) \triangleq \begin{bmatrix}
    x_{1}(t), \ldots, x_{N}(t)
\end{bmatrix}^T$ is the vector of all sensor observations gathered at time $t$, and $\boldx_{j}(t) \triangleq \begin{bmatrix}
    x_{j}(t-W+1), \ldots, x_{j}(t)
\end{bmatrix}^T$ is the vector of observations made by sensor $j\in\calN$ over time instants $t-W+1$ to $t$.

Since the observed data are spatially correlated across sensors and temporally correlated over consecutive time instants, we assume the existence of
sparsifying bases $\bPsi_{S} \in \mathbb{R}^{N \times N}$ and $\bPsi_{T} \in \mathbb{R}^{W\times W}$ in the spatial and temporal domains, respectively, such that $\underline{\boldx}(t) = \bPsi_{S} \boldsymbol{\theta}_{S}(t)$ and $\boldx_{n}(t) = \bPsi_{T} \boldsymbol{\theta}_{n,T}(t)$, for all $t$ and $n$, where $\boldsymbol \theta_{S}(t) \in \mathbb{R}^{N}$ and $\boldsymbol \theta_{n,T}(t) \in \mathbb{R}^{W}$ are sparse vectors.
Then, by adopting the Kronecker compressed sensing (KCS) framework \cite{duarte2012kronecker} and by following the derivations in \cite{leinonen2015sequential}, 
we obtain the joint spatial-temporal sparsity model described as follows:
\begin{align}
\bx(t)={\rm vec}\left(\bX(t)\right)
&= {\rm vec}\left(\bPsi_{S}\bTheta_{S}(t)\right) \\
 &= {\rm vec}\left(\bPsi_{S} \bTheta_{S}(t) \bPsi_{T}^{-T} \bPsi_{T}^T\right) \\
 &= {\rm vec}\left(\bPsi_{S} {\bf Z}(t) \bPsi_{T}^T\right) \label{eq.2Ddct} \\
 &= \left(\bPsi_{T} \otimes \bPsi_{S}\right){\rm vec}\left({\bf Z}(t)\right)\label{eq.kronecker_basis} \\
 &=\bPsi\bz(t),
\end{align}
where ${\bf Z}(t) = \bTheta_{S}(t) \bPsi_{T}^{-T} \in \mathbb{R}^{N\times W}$ with ${\bTheta_{S}}(t) \triangleq \begin{bmatrix}
\boldsymbol{\theta}_{S}(t-W+1), \ldots, \boldsymbol{\theta}_{S}(t)
\end{bmatrix}$, $\bz(t) = {\rm vec}({\bf Z}(t))$, and $\bPsi \triangleq \bPsi_{T} \otimes \bPsi_{S} \in \mathbb{R}^{ NW \times NW}$ is a  Kronecker sparsifying basis \cite{duarte2012kronecker}.

We assume that each cache, say cache $c$, only samples observations from $M_c(\ll |\calN_c|)$ randomly selected sensors within the set $\calN_c$
at each time instant. The set of sampled sensors at time $t$ is denoted by $\mathcal{M}_c(t)$, where $|\mathcal{M}_c(t)|=M_c$, and the collected measurements can be written as
\begin{equation}
\underline{\by}_c(t) =\begin{bmatrix}
    y_{c,1}(t), \ldots, y_{c,M_c}(t)
\end{bmatrix}^T= \underline{\bPhi}_c(t) \underline{\bx}(t),
 \end{equation}
where $\underline{\bPhi}_c(t) \in \mathbb{R}^{{M_c} \times N}$ is a sensor selection matrix with each row being a canonical vector with only one entry equal to $1$ and all others equal to $0$. The position of the $1$ in each row of $\bPhi_c(t)$ indicates the sensor that is sampled.
The measurements gathered by cache $c$ over the most recent $W$ time instants
are stored in the cache at time $t$ and can be stacked into the vector
\begin{align}
\by_{c}(t)&\triangleq
\begin{bmatrix}
    \underline{\by}_c^T(t-W+1),
    \ldots, \underline{\by}_c^T(t)
\end{bmatrix}^T\\
&= \!\begin{bmatrix}
	\underline{\bPhi}_c(t\!-\!W\!+\!1) &\!\cdots &{\bf 0} \\
    \vdots 		& \!\ddots	& \vdots    \\
    {\bf 0}		& \!\cdots	& {\bf \underline{\Phi}}_c(t)
\end{bmatrix}
\begin{bmatrix}
	\underline{\bx}(t\!-\!W\!+\!1) \\
    \vdots \\
    \underline{\bx}(t)
\end{bmatrix}\\
&= \bPhi_c(t) \bx(t),
\end{align}
where
$\bPhi_c(t) = {\rm diag}(\underline{\bPhi}_c(t-W+1), \cdots,  \underline{\bPhi}_c(t))$ is an ${M_cW} \times NW$ measurement matrix.
In the centralized scenario, a dedicated fusion center (FC) can collect all cache measurements $\by_c(t)$ and perform sensor data reconstruction by solving a classical $\ell_1$-minimization problem.
The centralized sparse data recovery problem can be formulated as
\begin{subequations}\label{eq.centralized_reconstruct}
\begin{align}
\min_{\bz(t)}&~\left\|\bz(t)\right\|_1\\
\text{subject to }&~ \begin{bmatrix}
    \by_1(t)\\ \vdots\\ \by_C(t)
\end{bmatrix} = \begin{bmatrix}
    \bPhi_1(t) \\
    \vdots \\
    \bPhi_C(t)
\end{bmatrix}\bPsi\bz(t). \label{eq.centralized_constraint}
\end{align}
\end{subequations}
By recoverying the underlying sparse vector $\bz(t)$, we are able to reconstruct the entire sensor field $\bx(t)$ by multiplying $\bz(t)$ with the Kronecker sparsifying basis 
$\bPsi$.  

In the compressed sensor caching problem under consideration, each cache may only have access to a limited amount of data collected from sensors within its vicinity, namely, $\by_c(t)$ for cache $c$. In this case, local data recovery at cache $c$ based on $\by_c(t)$ may be unreliable. In the following section, we develop a distributed sparse data recovery algorithm that improves local recovery by enabling collaboration among caches.


\section{Collaborative Sparse Data Recovery with Anchor Alignment} \label{sec.proposed_method}

\begin{figure}
\centering{\includegraphics[width=1.0\linewidth]{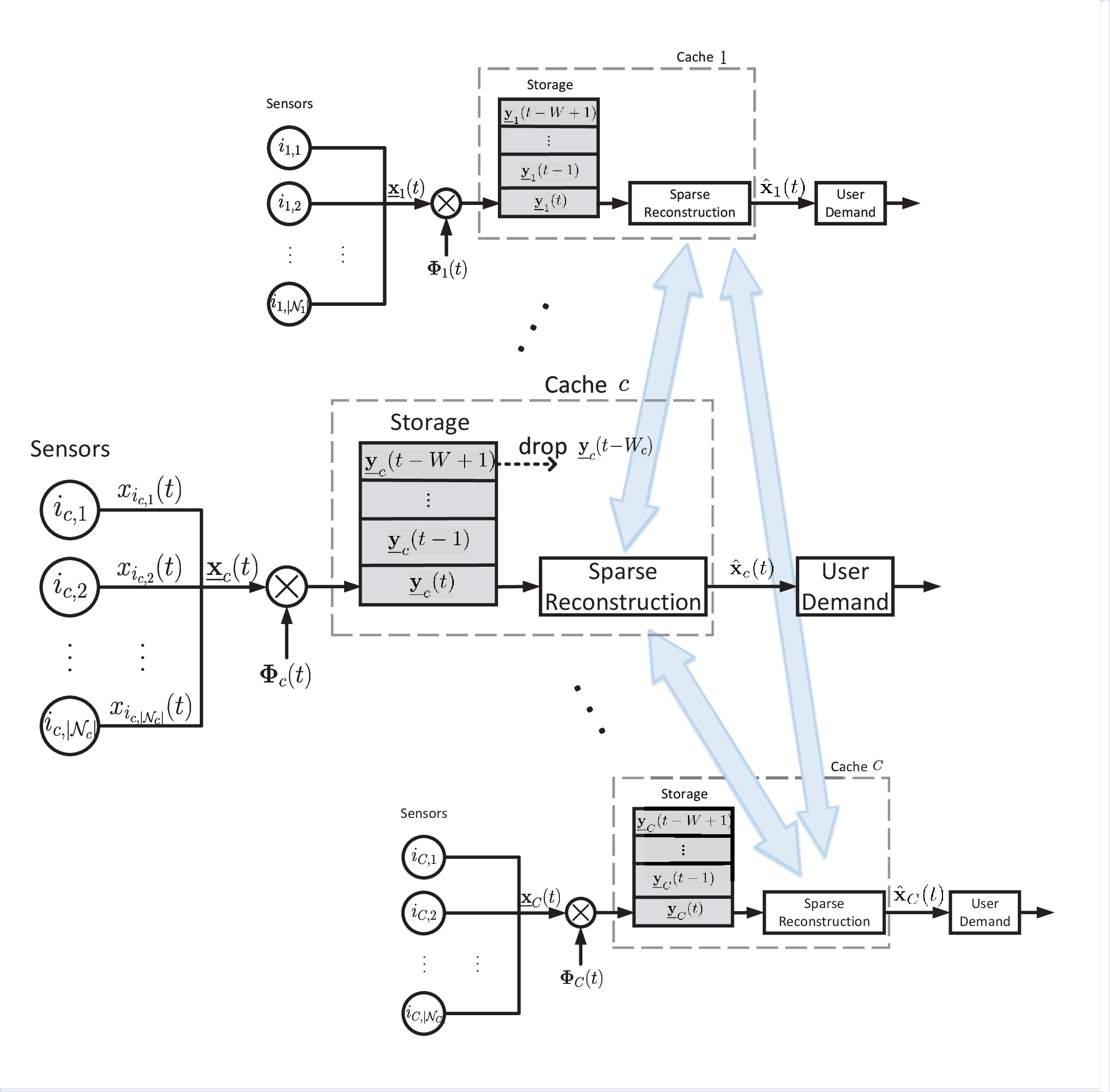}}
\caption{Model of the compressed sensor caching and data recovery operations at the multiple caches.} \label{fig:collaborative_reconstruction}
\vspace{-.2cm}
\end{figure}

In this section, we formulate and solve a distributed sparse data recovery problem that enables collaboration among local sensor caches, as illustrated in Fig. \ref{fig:collaborative_reconstruction}. 
Due to the limited storage capacity of caches and the cost of sensor communication, we allow each cache to collect and store only the observations obtained from certain sensors within its vicinity. In this case, each cache would not be able to obtain a reliable recovery of the entire sensor field and, thus, requires collaboration with other caches. In the following, we develop the proposed collaborative sparse data recovery algorithm and show how aligning the recovered data at only a limited number of anchor nodes is sufficient to exploit the advantages of collaboration.



\subsection{Collaborative Sparse Data Recovery Problem} \label{sec.multiple_cache}

To enable distributed implementation of the sparse data recovery problem in \eqref{eq.centralized_reconstruct}, we devise an approach to optimally decompose the problem into several subproblems that can be solved in parallel at the local caches. Specifically, by letting $\bz_c(t)$ be the sparse vector to be reconstructed at cache $c$, we can reformulate the sparse data recovery problem in \eqref{eq.centralized_reconstruct} as
\begin{subequations}\label{eq.distributed_reconstruct}
\begin{align}
\min_{\left\{\bz_{c}(t)\right\}_{c=1}^C}&~ \frac{1}{C}\sum_{c=1}^C\left\|\bz_{c}(t)\right\|_1\\
\text{subject to }&~ {\by}_{c}(t) \!=\! \bPhi_c(t) \bPsi\bz_{c}(t) , ~\forall c\in\calC \label{eq.observation_constraint}\\
&~\bz_{c}(t) \!=\! \bz_{c'}(t), ~\forall c'\in\calD_c, c\in\calC, \label{eq.consensus_constraint}
\end{align}
\end{subequations}
where $\calD_c\triangleq \{i_{c,1}, \ldots,i_{c,|\calD_c|}\}\subset \calC$ is the set of neighboring caches of cache $c$. 
We assume that all caches are connected, that is, there exists a path between any two caches within the network.
The optimization problems in \eqref{eq.centralized_reconstruct} and \eqref{eq.distributed_reconstruct} are equivalent since \eqref{eq.observation_constraint} is obtained by writing \eqref{eq.centralized_constraint} separately for each cache and \eqref{eq.consensus_constraint} ensures that all caches can reach a consensus on their recovery, i.e., $\bz_c(t)=\bz(t)$, for all $c$.

The optimization problem in \eqref{eq.distributed_reconstruct} enables a distributed implementation where
each cache strives to minimize $\|\bz_c(t)\|_1$ from its own measurements $\by_c(t)=\bPhi_c(t)\bPsi\bz_c(t)$, along with a certain in-built consensus protocol among caches to ensure \eqref{eq.consensus_constraint}. However, since caches can only collect observations from sensor nodes inside their covered regions, recovery of data beyond reach is typically subject to large errors. Meanwhile, consensus on all parameters in $\bz_c(t)$, as required in
\eqref{eq.consensus_constraint}, would entail a large communication cost dedicated to exchanging the estimated values of $\bz_c(t)$, which has a high spatial-temporal dimension. To resolve this drawback, we propose the use of ``anchor nodes'', on which the reconstructed sensor observations are imposed to be identical (i.e., ``aligned'') across all neighboring caches. In principle, such consistency on anchor node observations can introduce coupling between measurements from different caches, offering each cache side information about the observations collected by other caches.
To some extent, anchor nodes serve as ``pivots" against which the reconstructed sensing fields from multiple caches are  stitched
together. Furthermore, since the number of anchor nodes is much smaller than the ambient signal dimension, 
the required communication cost can be drastically reduced. 

Specifically, let $\calQ_{c,c'}(t)\subset \calN$ be the set of anchor nodes selected by the pair of caches $c$ and $c'$ at time instant $t$. The anchor nodes can be chosen independently by different pairs of collaborating caches, but are common within each pair, i.e., $\calQ_{c,c'}(t) = \calQ_{c',c}(t)$, for all $c$ and $c'$. Our proposed anchor alignment strategy proposes to replace \eqref{eq.consensus_constraint} with the following measurement consistency condition, which aligns the recovered observations associated with the selected set of anchor nodes, that is,
\begin{align} \label{eq.anchor_node_constraint}
 \bGamma_{c,c'}(t)\bPsi\bz_{c}(t) = \bGamma_{c,c'}(t)\bPsi\bz_{c'}(t), \forall c'\in\calD_c, c\in\calC,   
\end{align}
where ${\bGamma}_{c,c'}(t)={\rm diag}(\underline{\bGamma}_{c,c'}(t-W+1),\ldots,\underline{\bGamma}_{c,c'}(t))\in\mathbb{R}^{|\calQ_{c,c'}(t)| W\times N W}$ is  block diagonal with $\underline{\bGamma}_{c,c'}(t)\in\mathbb{R}^{ |\calQ_{c,c'}(t)|\times N}$ being an anchor node selection matrix 
with each row being a canonical vector with only one entry equal to $1$ and all others equal to $0$. The position of the $1$ in each row 
indicates the index of a selected anchor node. 


In the following, we provide a sufficient condition that guarantees the equivalence between the consistency on anchor-node observations in \eqref{eq.anchor_node_constraint} and the consistency of sparse representations in \eqref{eq.consensus_constraint}. Our sufficient condition is characterized in terms of the {\it restricted isometric property (RIP)}.

\begin{Def}[Restricted Isometric Property (RIP) \cite{candes_wakin_michael_2008}]
    Let ${\bf D}\in\mathbb{R}^{M\times N}$ be a matrix. If there exists a constant $\delta_S\in(0,1)$ called restricted isometry constant such that
\begin{align} \label{eq.RIP}
    (1-\delta_{S})\|\bz\|_2^2 \leq \|{\bf D}\bz\|_2^2 \leq (1+\delta_{S})\|\bz\|_2^2
\end{align}
for all $S$-sparse signals  $\bz\in\mathbb{R}^{N}$, then ${\bf D}$ is said to satisfy the restricted isometry property of order $S$ with constant $\delta_S$.
\end{Def}

\begin{Pro}
    If \(\bGamma_{c,c'}(t)\bPsi\) satisfies RIP of order \(2S\), then \eqref{eq.anchor_node_constraint} is equivalent to \eqref{eq.consensus_constraint}.    
\end{Pro}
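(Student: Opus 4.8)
The plan is to establish the equivalence by proving the two implications separately, noting that the RIP hypothesis is needed only for one of them. The direction \eqref{eq.consensus_constraint} $\Rightarrow$ \eqref{eq.anchor_node_constraint} is immediate and requires no assumption: if $\bz_c(t)=\bz_{c'}(t)$, then left-multiplying both sides by $\bGamma_{c,c'}(t)\bPsi$ yields \eqref{eq.anchor_node_constraint} directly. This half does not use sparsity or RIP at all, so the entire content of the proposition lies in the converse.

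For the converse, \eqref{eq.anchor_node_constraint} $\Rightarrow$ \eqref{eq.consensus_constraint}, I would first introduce the difference vector ${\bf d}\triangleq\bz_c(t)-\bz_{c'}(t)$ and rewrite the anchor-alignment condition \eqref{eq.anchor_node_constraint} in the homogeneous form $\bGamma_{c,c'}(t)\bPsi\,{\bf d}={\bf 0}$. The crucial observation is a sparsity-counting step: since both $\bz_c(t)$ and $\bz_{c'}(t)$ are $S$-sparse representations, their difference ${\bf d}$ has at most $2S$ nonzero entries and is therefore $2S$-sparse. This is exactly why the hypothesis is stated with RIP of order $2S$ rather than order $S$.

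With ${\bf d}$ identified as a $2S$-sparse vector, I would then apply the lower RIP bound of \eqref{eq.RIP} to the matrix $\bGamma_{c,c'}(t)\bPsi$, giving $(1-\delta_{2S})\|{\bf d}\|_2^2\le\|\bGamma_{c,c'}(t)\bPsi\,{\bf d}\|_2^2$. Because the right-hand side is zero by the homogeneous constraint and $1-\delta_{2S}>0$ (as $\delta_{2S}\in(0,1)$), this forces $\|{\bf d}\|_2^2\le 0$, hence ${\bf d}={\bf 0}$ and $\bz_c(t)=\bz_{c'}(t)$, which is precisely \eqref{eq.consensus_constraint}.

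I do not anticipate a serious technical difficulty; once the sparsity bookkeeping is set up, the argument reduces to a one-line application of the RIP lower bound. The only point demanding care is the justification that ${\bf d}$ is genuinely $2S$-sparse, which relies on the modeling assumption that each local estimate $\bz_c(t)$ is $S$-sparse. I would state this assumption explicitly and emphasize the matching between the doubled sparsity level of the difference and the order of the RIP, since this matching is the entire reason the RIP-of-order-$2S$ assumption enters the statement.
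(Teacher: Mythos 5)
Your proposal is correct and follows essentially the same argument as the paper: the forward direction by direct substitution, and the converse by noting that the difference $\bz_c(t)-\bz_{c'}(t)$ is $2S$-sparse and applying the lower RIP bound to force it to zero (the paper merely phrases this as a proof by contradiction rather than your direct version, which is a cosmetic difference). Your explicit remark that the $S$-sparsity of each local estimate is a modeling assumption worth stating is a fair point of care, and matches the paper's implicit use of it.
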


\begin{proof} 
To show the equivalence, we need to show that 
\eqref{eq.anchor_node_constraint} implies \eqref{eq.consensus_constraint} and vice versa. The latter follows directly by substituting \eqref{eq.consensus_constraint} into \eqref{eq.anchor_node_constraint}.
The former is less obvious but can be proved by contradiction utilizing the RIP defined above. In particular, suppose that there exist two distinct $S$-sparse vectors $\bz_c(t)$ and $\bz_{c'}(t)$, for some $c\in\calC$ and $c'\in\calD_c$, satisfying $\bGamma_{c,c'}(t)\bPsi\bz_c(t) = \bGamma_{c,c'}(t)\bPsi\bz_{c'}(t)$, or equivalently $\bGamma_{c,c'}(t)\bPsi(\bz_c(t) - \bz_{c'}(t)) = \mathbf{0}$. Notice that $\bz_c(t) - \bz_{c'}(t)$ is a $2S$-sparse signal since both $\bz_c(t)$ and $\bz_{c'}(t)$ are $S$-sparse, and that $\|\bz_c(t) - \bz_{c'}(t)\|_2^2>0$ since $\bz_c(t)$ and $\bz_{c'}(t)$ are distinct.
Then, by the RIP condition in \eqref{eq.RIP}, we know that
   $(1-\delta_{2S})\|\bz_c-\bz_{c'}\|_2^2 \leq \|\bGamma_{c,c'}\bPsi(\bz_c-\bz_{c'})\|_2^2 = 0$, which contradicts with the fact that $\|\bz_c-\bz_{c'}\|_2>0$. Hence, it follows that \eqref{eq.anchor_node_constraint} implies \eqref{eq.consensus_constraint}.
\end{proof}



Note that $\bGamma_{c,c'}(t)\bPsi$ consists of $|\calQ_{c,c'}(t)|$ rows of the spatial-temporal sparsifying matrix $\bPsi$. With a sufficient number of anchor nodes, we expect $\bGamma_{c,c'}(t)\bPsi$ to eventually satisfy RIP of order $2S$, allowing \eqref{eq.consensus_constraint} and \eqref{eq.anchor_node_constraint} to become equivalent.
In our simulation study (c.f. Section \ref{sec.simulations}), we choose $\bPsi$ to be the 2D-DCT matrix. Even though a theoretical sample complexity (here, the amount of anchor nodes) ensuring RIP is hard to obtain, such a partial 2D-DCT matrix has been proved successful in many practical applications of CS \cite{stankovic_mandic_dakovic_kisil_202, zammit_wassell_2020}.

Following Proposition 1, we replace 
\eqref{eq.consensus_constraint} with \eqref{eq.anchor_node_constraint} and equivalently reformulate the collaborative sparse data recovery problem as
\begin{subequations}\label{eq.distributed_reconstruct_with_AN}
\begin{align}
\min_{\left\{\bz_{c}(t)\right\}_{c=1}^C}&\sum_{c=1}^C\left\|\bz_{c}(t)\right\|_1\\
\text{subject to }&~ {\by}_{c}(t) = \bPhi_c(t) \bPsi\bz_{c}(t) , ~\forall c\in\calC\\
&~\bGamma_{c,c'}(t) \bPsi \bz_{c}(t) \!=\! \bGamma_{c,c'}(t) \bPsi \bz_{c'}(t),\notag\\
&~\forall c'\in\calD_c, c\in\calC. \label{eq.anchor_constraint}
\end{align}
\end{subequations}
The solution to \eqref{eq.distributed_reconstruct_with_AN} is denoted by $\hat{\bz}_{c}(t)$, for $c=1,\ldots, C$. Then, the reconstructed sensor observations at cache $c$ can be computed as $\hat{\bf x}_{c}(t)=\bPsi\hat{\bz}_{c}(t)$.

\subsection{Solution based on the Consensus ADMM Algorithm}

To enable distributed implementation of the collaborative sparse recovery problem in \eqref{eq.distributed_reconstruct_with_AN}, we adopt the consensus ADMM algorithm to decompose the problem into multiple subproblems that can be solved in parallel by local caches.
To simplify our notations, we shall omit the time index $t$ in the following and 
replace ${\bz}_{c}(t)$, $\by_c(t)$, $\bPhi_c(t)$, $\bGamma_{c,c'}(t)$ and $\calQ_{c,c'}(t)$ with ${\bz}_c$, $\by_c$, $\bPhi_c$, $\bGamma_{c,c'}$ and $\calQ_{c,c'}$, respectively. By introducing the auxiliary variables $\tbz_c$ and $\bv_{c,c'}$, for all $c$ and $c'$, the
optimization problem in \eqref{eq.distributed_reconstruct_with_AN} can be reformulated as
\begin{subequations} \label{eq.admm}
\begin{align}
\mathop{\min}_{
    \substack{\bz_c, \tbz_c, \bv_{c,c'},\\
    \forall c'\in\calD_c, c\in\calC}
 } \quad &\sum_{c=1}^{C} \left\|\tbz_c \right\|_1 \label{eq.admm_a} \\
\text{subject to } ~~& \bPhi_c \bPsi \bz_{c} = \by_c, ~\forall c\in\calC \label{eq.admm_b}\\
~~& \bGamma_{c,c'}\bPsi\bz_c = \bv_{c,c'}, ~\forall c' \in \calD_{c}, c \in \calC, \label{eq.admm_c}\\
~~& \bGamma_{c',c}\bPsi\bz_{c'} = \bv_{c,c'}, ~\forall c' \in \calD_{c}, c \in \calC, \label{eq.admm_d}\\
~~& \bz_c = \tbz_c, ~\forall c\in\calC. \label{eq.admm_e}
\end{align}
\end{subequations}
Here, $\bv_{c,c'}$ represents the consensus variable between $c$ and $c'$, which is used to ensure consistency among the recovered anchor observations at the two caches, 
and $\tbz_c$ is an auxiliary variable that enables us to decouple the nonsmooth $\ell_1$-norm term from the smooth regularization terms to be seen in the following. Notice that \eqref{eq.admm} is in a standard ADMM form and thus can be solved by the consensus ADMM algorithm \cite{boyd2011distributed}.

In particular, let $\blambda_c\in\mathbb{R}^{ M_cW}$, $ \bmu_{c,c'}\in\mathbb{R}^{|\calQ_{c,c'}|W}$,$ \bnu_{c',c}\in\mathbb{R}^{|\calQ_{c,c'}|W}$ and $\bxi_c\in\mathbb{R}^{NW}$ be the multipliers corresponding to \eqref{eq.admm_b}, \eqref{eq.admm_c}, \eqref{eq.admm_d} and \eqref{eq.admm_e}, respectively. Then, the augmented Lagrangian function can be written as
\begin{align}
    L_{\rho}&(\bz, \bv, \bgamma)\notag\\
    =&\sum_{c=1}^{C}\bigg\{
    \left\|\tbz_c\right\|_1 + \blambda_c^T\left(\bPhi_c\bPsi\bz_c-\by_c\right) + \bxi_c^T\left(\bz_c-\tbz_c\right)\notag\\
    &+\sum_{c'\in\calD_c}\Big[ \bmu_{c,c'}^T\left(\bGamma_{c,c'}\bPsi\bz_c - \bv_{c,c'}\right)\notag\\
    &+\bnu^T_{c',c}\left(\bGamma_{c,c'}\bPsi\bz_c - \bv_{c',c}\right)\Big] + \frac{\rho}{2}\left\|\bPhi\bPsi\bz_c-\by_c\right\|_2^2\notag\\ &+ \frac{\rho}{2}\left\|\bz_c-\tbz_c\right\|_2^2 + \frac{\rho}{2}\sum_{c'\in\calD_c}\Big[\left\|\bGamma_{c,c'}\bPsi\bz_c - \bv_{c,c'}\right\|_2^2 \notag\\ &+\left\|\bGamma_{c,c'}\bPsi\bz_c - \bv_{c',c}\right\|_2^2\Big]
    \bigg\}, \label{eq.lagrangian}
\end{align}
where $\bgamma = [\bgamma_1^T, \ldots, \bgamma_C^T]^T$ with $\bgamma_c = \big[
    \blambda_c^T, \bmu_{c,i_{c,1}}^T, \bnu_{i_{c,1},c}^T,\ldots,$ $ \bmu_{c,i_{c,|\calD_c|}}^T, \bnu_{i_{c,|\calD_c|},c}^T, \bxi_c^T
\big ]^T$, and $\rho>0$ is a penalty parameter.
Notice that the augmented Lagrangian in \eqref{eq.lagrangian} can be viewed as the sum of $C$ sub-Lagragian functions, 
each of which acts as the objective function of the subproblem to be solved locally at each cache in the distributed ADMM algorithm. That is, in the $k$-th iteration of the ADMM algorithm, 
the local parameters of cache $c$ are updated according to the following equations:
\begin{subequations} \allowdisplaybreaks
\label{eq.update}
\begin{align}
    &\blambda_c^{[k]} = \blambda_c^{[k-1]} + \rho\Big(\bPhi_c\bPsi\bz_c^{[k-1]}\!-\!\by_c\Big), \label{eq.update_lambda}\\
    &\bmu_{c,c'}^{[k]} = \bmu_{c,c'}^{[k-1]} + \rho\Big(\bGamma_{c,c'}\bPsi\bz_c^{[k-1]}\!-\!\bv_{c,c'}^{[k-1]}\Big),\,\forall c'\!\in\!\calD_c, \label{eq.update_mu}\\
    &\bnu_{c',c}^{[k]} = \bnu_{c',c}^{[k-1]} + \rho\Big(\bGamma_{c,c'}\bPsi\bz_c^{[k-1]}\!-\!\bv_{c',c}^{[k-1]}\Big),\,\forall c'\!\in\!\calD_c, \label{eq.update_nu}\\
    &\bxi_c^{[k]} = \bxi_c^{[k-1]} + \rho\Big(\bz_c^{[k-1]}-\tbz_c^{[k-1]}\Big), \label{eq.update_xi}\\
    &\bz_c^{[k]} = \mathop{\arg\min}_{\bz_c} \bigg\{{\blambda_c^{[k]}}^T\big(\bPhi_c\bPsi\bz_c-\by_c\big)+ {\bxi_c^{[k]}}^T\big(\bz_c-\tbz_c^{[k-1]}\big)\notag\\ &\hspace{2.2cm}+\sum_{c'\in\calD_c}\Big[{\bmu_{c,c'}^{[k]}}^{\!\!T}\big(\bGamma_{c,c'}\bPsi\bz_c-\bv_{c,c'}^{[k-1]}\big)\notag\\
    &\hspace{2.2cm}+ {\bnu_{c',c}^{[k]}}^{\!\!T}\big(\bGamma_{c,c'}\bPsi\bz_c-\bv_{c',c}^{[k-1]}\big)\Big]\notag\\
    &\hspace{2.2cm}+ \frac{\rho}{2}\big\|\bPhi_c\bPsi\bz_c-\by_c\big\|_2^2 + \frac{\rho}{2}\big\|\bz_c-\tbz_c^{[k-1]}\big\|_2^2\notag\\
    &\hspace{2.2cm}+ \frac{\rho}{2}\sum_{c'\in\calD_c}\Big[\big\|\bGamma_{c,c'}\bPsi\bz_c-\bv_{c,c'}^{[k-1]}\big\|_2^2\notag\\
    &\hspace{2.2cm}+ \big\|\bGamma_{c,c'}\bPsi\bz_c-\bv_{c',c}^{[k-1]}\big\|_2^2\Big]\bigg\}, \label{eq.update_first_step}\\
    &\Big(\tbz_c^{[k]}, \{\bv_{c,c'}^{[k]}\}_{c'\in\calD_c}\Big) \notag\\
    &~~= \mathop{\arg\min}_{\tbz_c, \{\bv_{c,c'}\}} \bigg\{ \big\|\tbz_c\big\|_1 + {\bxi_c^{[k]}}^T\big(\bz_c^{[k]}-\tbz_c\big)\notag\\
    &\hspace{2cm}+ \sum_{c'\in\calD_c}\Big[{\bmu_{c,c'}^{[k]}}^{\!\!T}\big(\bGamma_{c,c'}\bPsi\bz_c^{[k]}-\bv_{c,c'}\big)\notag\\
    &\hspace{2cm}+ {\bnu_{c,c'}^{[k]}}^{\!\!T}\big(\bGamma_{c',c}\bPsi\bz_{c'}^{[k]}-\bv_{c,c'}\big)\Big]\notag\\
    &\hspace{2cm}+ \frac{\rho}{2}\big\|\bz_c^{[k]}-\tbz_c\big\|_2^2\notag\\
    &\hspace{2cm}+ \frac{\rho}{2}\sum_{c'\in\calD_c}\Big[\big\|\bGamma_{c,c'}\bPsi\bz_c^{[k]}-\bv_{c,c'}\big\|_2^2\notag\\
    &\hspace{2cm}+ \big\|\bGamma_{c',c}\bPsi\bz_{c'}^{[k]}-\bv_{c,c'}\big\|_2^2\Big]\bigg\}
     \label{eq.update_second_step}
\end{align}
\end{subequations}
where the variables with superscript $[k]$ stand for their values in the $k$-th iteration.

Notice that \eqref{eq.update_first_step} and \eqref{eq.update_second_step} can be derived in closed form. In particular,
by taking the derivative of the objection function in \eqref{eq.update_first_step} with respect to $\bz_c$ and setting it to zero, we get
\begin{align} \label{eq.update_z}
    \bz_c^{[k]} =& \frac{1}{\rho}\bigg[\left(\bPhi_c\bPsi\right)^T\bPhi_c\bPsi + 2\sum_{c'\in\calD_c}\left(\bGamma_{c,c'}\bPsi\right)^T\bGamma_{c,c'}\bPsi+ {\bf I}\bigg]^{-1}\notag\\
    &\cdot\bigg[\left(\bPhi_c\bPsi\right)^T\Big(\rho\by_c - \blambda_c^{[k]}\Big) + \rho\tbz_c^{[k-1]} - \bxi_c^{[k]}\notag\\
    &~~~~+ \rho\sum_{c'\in\calD_c}\left(\bGamma_{c,c'}\bPsi\right)^T\Big(\bv_{c,c'}^{[k-1]} + \bv_{c',c}^{[k-1]}\Big)\notag\\
    &~~~~- \sum_{c'\in\calD_c}\left(\bGamma_{c,c'}\bPsi\right)^T\Big(\bmu_{c,c'}^{[k]} + \bnu_{c',c}^{[k]}\Big)\bigg].
\end{align}
Moreover, to solve for $\tilde \bz_c$ in \eqref{eq.update_second_step}, we optimize each entry of $\tilde \bz_c$, i.e., $\tilde z_{c,i}$, for $i=1,\ldots, NW$, by first finding their optimal values in the ranges $\tilde z_{c,i}>0$ and $\tilde z_{c,i}<0$ separately. Then, by combining the results, the solution can be summarized as follows:
\begin{align} \label{eq.update_tz}
    \tilde z_{c,i}^{[k]} = {\cal S}_{1/\rho}\bigg(z_{c,i}^{[k]}+\frac{\xi_{c,i}^{[k]}}{\rho}\bigg), ~i = 1, \cdots, NW,
\end{align}
where $z_{c,i}^{[k]}$ and $\xi_{c,i}^{[k]}$ represent the $i$-th entry of vectors $\bz_c^{[k]}$ and $\bxi_c^{[k]}$, respectively, 
and 
${\cal S}_{\kappa}(a)$ is the soft thresholding operator defined such that ${\cal S}_{\kappa}(a) = a-\kappa$ ($a+\kappa$, respectively) if $a > \kappa$ ($a < -\kappa$, respectively), and $0$, otherwise.
Similarly, by taking the derivative of the objective function of \eqref{eq.update_second_step} with respect to $\bv_{c,c'}$ and setting it to zero, we obtain
\begin{align} \label{eq.differential_v2}
    \bv_{c,c'}^{[k]} =& \frac{1}{2}\left(\bGamma_{c,c'}\bPsi{\bz}_c^{[k]} + \bGamma_{c',c}\bPsi{\bz}_{c'}^{[k]}\right) + \frac{\bmu_{c,c'}^{[k]} + \bnu_{c,c'}^{[k]}}{2\rho},
\end{align}
for all $c'\in\calD_c$.
Moreover,
by summing \eqref{eq.update_mu} and \eqref{eq.update_nu}, it follows that
\begin{align}
    \bmu_{c,c'}^{[k]} + \bnu_{c,c'}^{[k]} 
    =& \bmu_{c,c'}^{[k-1]} + \bnu_{c,c'}^{[k-1]} - 2\rho\bv_{c,c'}^{[k-1]}\notag\\
    &+ \rho\Big(\bGamma_{c,c'}\bPsi{\bz}_c^{[k-1]} + \bGamma_{c',c}\bPsi{\bz}_{c'}^{[k-1]}\Big).\label{eq.mu_add_nu}
\end{align}
By substituting \eqref{eq.differential_v2} into \eqref{eq.mu_add_nu}, we can see that $\bmu_{c,c'}^{[k]} + \bnu_{c,c'}^{[k]} = {\bf 0}$.
Thus, \eqref{eq.differential_v2} becomes
\begin{align} \label{eq.update_v}
    \bv_{c,c'}^{[k]} =& \frac{1}{2}\left(\bGamma_{c,c'}\bPsi{\bz}_c^{[k]} + \bGamma_{c',c}\bPsi{\bz}_{c'}^{[k]}\right).
\end{align}
Notably, $\bv^{[k]}_{c,c'} = \bv^{[k]}_{c',c}$ due to symmetry.

To perform the above calculations, it is necessary for cache $c$ to receive updates of the reconstructed anchor node observations from all caches within its vicinity, i.e., $\bGamma_{c',c}\bPsi\bz_{c'}^{[k]}$, for all $c'\in\calD_c$.
The consensus variable $\bv^{[k]}_{c,c'}$ can then be updated by taking the mean of the reconstructed anchor node observations at caches $c$ and $c'$, as in \eqref{eq.update_v}. Moreover, by using \eqref{eq.update_v}, the update equations of multipliers $\bmu_{c,c'}^{[k]}$ and $\bnu_{c',c}^{[k]}$ in \eqref{eq.update_mu} and \eqref{eq.update_nu}, respectively,
can be expressed as
\begin{subequations} \label{eq.update_mu_nu}
\begin{align}
    \bmu_{c,c'}^{[k]} =& \,\bmu_{c,c'}^{[k-1]} + \frac{\rho}{2}\left(\bGamma_{c,c'}\bPsi\bz_c^{[k-1]}-\bGamma_{c',c}\bPsi\bz_{c'}^{[k-1]}\right), \label{eq.update3_mu}\\
    \bnu_{c',c}^{[k]} =& \,\bnu_{c',c}^{[k-1]} + \frac{\rho}{2}\left(\bGamma_{c,c'}\bPsi\bz_c^{[k-1]}-\bGamma_{c',c}\bPsi\bz_{c'}^{[k-1]}\right). \label{eq.update3_nu}
\end{align}
\end{subequations}
In \eqref{eq.update3_mu} and \eqref{eq.update3_nu}, both $\bmu_{c,c'}^{[k]}$ and $\bnu_{c',c}^{[k]}$ are updated according to the difference between the reconstructed anchor node observations at caches $c$ and $c'$. Hence, by choosing the same initial values for $\bmu_{c,c'}^{[k]}$ and $\bnu_{c',c}^{[k]}$, i.e., by setting  
$\bmu_{c,c'}^{[0]} = \bnu_{c',c}^{[0]}$, 
we have $\bmu_{c,c'}^{[k]} = \bnu_{c',c}^{[k]}$ for all $k>0$. Consequently, the set of recursive equations in \eqref{eq.update} can be further simplified as
\begin{subequations} \label{eq.update2}
\begin{align}
    \blambda_c^{[k]} &= \blambda_c^{[k-1]} + \rho\left(\bPhi_c\bPsi\bz_c^{[k-1]}-\by_c\right), \label{eq.update2_lambda}\\
    \bmu_{c,c'}^{[k]} &= \bmu_{c,c'}^{[k-1]} + \frac{\rho}{2}\left(\bGamma_{c,c'}\bPsi\bz_c^{[k-1]}\!-\!\bGamma_{c',c}\bPsi\bz_{c'}^{[k-1]}\right),\,\forall c'\!\in\!\calD_c, \label{eq.update2_mu}\\
    \bxi_c^{[k]} &= \bxi_c^{[k-1]} + \rho\left(\bz_c^{[k-1]}-\tbz_c^{[k-1]}\right), \label{eq.update2_xi}\\
    \bz_c^{[k]} &= \frac{1}{\rho}\bigg[ (\bPhi_c\bPsi)^T\bPhi_c\bPsi + 2\sum_{c'\in\calD_c}(\bGamma_{c,c'}\bPsi)^T\bGamma_{c,c'}\bPsi+ {\bf I} \bigg]^{-1}\notag\\
    & \hspace{0.6cm}\cdot\bigg[\left(\bPhi_c\bPsi\right)^T\left(\rho\by_c-\blambda_c^{[k]}\right) + \rho\tbz_c^{[k-1]} - \bxi_c^{[k]}\notag\\    &\hspace{0.6cm}+\rho\sum_{c'\in\calD_c}\left(\bGamma_{c,c'}\bPsi\right)^T\left(\bGamma_{c,c'}\bPsi\bz_c^{[k-1]}\!+\!\bGamma_{c',c}\bPsi\bz_{c'}^{[k-1]}\right)\notag\\    &\hspace{0.6cm}-2\sum_{c'\in\calD_c}\left(\bGamma_{c,c'}\bPsi\right)^T\bmu_{c,c'}^{[k]} \bigg], \label{eq.update2_z}\\
    \tilde z_{c,i}^{[k]} &= {\cal S}_{1/\rho}\bigg(z_{c,i}^{[k]}+\frac{\xi_{c,i}^{[k]}}{\rho}\bigg), ~i = 1, \cdots, NW. \label{eq.update2_tz}
\end{align}
\end{subequations}

Notice that the above update equations are implemented in parallel at the local caches with only the exchange of the recovered anchor node observations with its neighbors in each iteration. The update procedure at cache $c$ is summarized in Algorithm \ref{alg:admm}. Specifically, in
iteration $k$, cache $c\in\calC$ first updates the multipliers $\blambda_c^{[k]}$, $\{\bmu_{c,c'}^{[k]}\}_{c'\in\calD_c}$, $\bxi_c^{[k]}$ by using the measurement $\by_c$, the estimated sparse signal $\bz_c^{[k-1]}$ , the reconstructed anchor node observations of cache $c$ and its neighboring caches in iteration $k-1$, and the auxiliary variable $\tbz_c^{[k-1]}$. Afterwards, the current $\bz_c^{[k]}$ is updated according to \eqref{eq.update2_z} by using the previous auxiliary variable $\tbz_c^{[k-1]}$, the reconstructed anchor node observations of caches $c$ and its neighboring caches in iteration $k-1$, and the updated multipliers $\blambda_c^{[k]}$, $\bmu_{c,c'}^{[k]}$ and $\bxi_c^{[k]}$. 
Then, the auxiliary variable $\tbz_c^{[k]}$ is updated by using $\bz_c^{[k]}$ and $\bxi_c^{[k]}$ as in \eqref{eq.update2_tz}. Finally, cache $c$ updates its anchor node observation $\bGamma_{c,c'}\bPsi\bz_c^{[k]}$ and sends it to the neighboring caches. 
The above procedures are repeated until the norm of the primal and dual residuals \cite{boyd2011distributed}, defined as
\begin{align}
    \big\|{\bf r}^{[k]}\big\|_2^2 =&\sum_{c=1}^{C}\bigg[\big\|\bPhi_c\bPsi\bz_c^{[k]}-\by_c\big\|_2^2 + \big\|\bz_c^{[k]}-\tbz_c^{[k]}\big\|_2^2\notag\\
    &+ \frac{1}{2}\sum_{c'\in\calD_c}\big\|\bGamma_{c,c'}\bPsi\bz_c^{[k]}-\bGamma_{c',c}\bPsi\bz_{c'}^{[k]}\big\|_2^2\bigg]
\end{align}
\begin{align}
\big\|{\bf s}^{[k]}\big\|_2^2 =& \rho^2\sum_{c=1}^{C}\Big\|\tbz_c^{[k-1]}-\tbz_c^{[k]} + \sum_{c'\in\calD_c}(\bGamma_{c,c'}\bPsi)^T(\bGamma_{c,c'}\bPsi)\notag\\
&\cdot(\bz_c^{[k-1]}-\bz_c^{[k]}+\bz_{c'}^{[k-1]}-\bz_{c'}^{[k]})\Big\|_2^2,
\end{align}
are sufficiently small or the maximum number of iterations is reached. The convergence conditions are chosen as
\begin{align}
    \big\|{\bf r}^{[k]}\big\|_2^2\leq\epsilon_{\rm pri}~~\text{and}~~\big\|{\bf s}^{[k]}\big\|_2^2\leq \epsilon_{\rm dual}, \label{eq.stopping_criterion}
\end{align}
where $\epsilon_{\rm pri}>0$ and $\epsilon_{\rm dual}>0$ are the convergence thresholds for the primal and dual feasibility conditions \cite{boyd2011distributed, Xu_2017}. 
Following \cite{he2000alternating} and \cite{boyd2011distributed}, we also adjust the penalty parameter $\rho$ in each iteration to further improve the convergence rate.
In particular, the penalty parameter scheme is adjusted as
\begin{align} \label{eq.penalty_parameter}
    \rho^{[k]} =
\begin{cases}
\tau\rho^{[k-1]}& \text{if } \big\|{\bf r}^{[k-1]}\big\|_2 > \eta\big\|{\bf s}^{[k-1]}\big\|_2,\\
\tau^{-1}\rho^{[k-1]}& \text{if } \big\|{\bf s}^{[k-1]}\big\|_2 > \eta\big\|{\bf r}^{[k-1]}\big\|_2,\\
\rho^{[k-1]} & \text{otherwise},
\end{cases}
\end{align}
where $\tau>1$ and $\eta>1$ are parameters that scale the rate of increase or decrease of $\rho$.
Increasing $\rho^{[k]}$ tends to have larger dual residual ${\bf s}^{[k]}$ but smaller primal residual ${\bf r}^{[k]}$. Conversely, decreasing $\rho^{[k]}$ tends to produce smaller dual residual ${\bf s}^{[k]}$ but larger primal residual ${\bf r}^{[k]}$. 


\begin{algorithm}[t]
  \KwIn{$\by_c\in\mathbb{R}^{M_cW}$.}
  \KwOut{Optimal ${\bz}_c^*, \forall c\in\calC$.}
  \Variable{$\bz_c^{[k]}\in\mathbb{R}^{NW}, \tbz_c^{[k]}\in\mathbb{R}^{NW}, \blambda_c^{[k]}\in\mathbb{R}^{M_cW}, \bmu_{c,c'}^{[k]}\in\mathbb{R}^{|\calQ_{c,c'}|W}, \bxi_c^{[k]}\in\mathbb{R}^{NW}, \forall c'\in\calD_c$ for each cache $c\in\calC$.}
  \Initialization{Set $k = 0, \rho^{[0]} = 10$. All variables are set to ${\bf 0}$.}
  \Repeat{the stopping criterion \eqref{eq.stopping_criterion} is satisfied}{
    $k \leftarrow k + 1$\;
    \For{$\forall c\in\calC$ (in parallel)}{
        Update $\blambda_c^{[k]}$ via \eqref{eq.update2_lambda}\;
        Update $\bmu_{c,c'}^{[k]}$ via \eqref{eq.update2_mu}\;
        Update $\bxi_c^{[k]}$ via \eqref{eq.update2_xi}\;
        Update $\bz_c^{[k]}$ via \eqref{eq.update2_z}\;
        Update $\tbz_c^{[k]}$ via \eqref{eq.update2_tz}\;
        Send $\bGamma_{c,c'}\bPsi\bz_c^{[k]}$ to and receive $\bGamma_{c',c}\bPsi\bz_{c'}^{[k]}$ from all caches $c'\in\calD_c$\;
    }
    Update $\rho^{[k+1]}$ via \eqref{eq.penalty_parameter}\;
    }
  \caption{Collaborative Sparse Recovery by Anchor Alignment (CoSR-AA)}
  \label{alg:admm}
\end{algorithm}

{\bf Complexity Analysis.} Here, we summarize the computational complexity of each iteration by measuring the number of multiplications. 
For \eqref{eq.update2_lambda} and \eqref{eq.update2_mu}, since $\bPhi_c$ and $\bGamma_{c,c'}$ represent the node selection in practice and not part of the actual computation, 
we only consider the complexity of computing 
$\bPsi\bz_c^{[k-1]}$, 
which is $\calO(N^2W^2)$ in both equations. Moreover,  \eqref{eq.update2_xi} and \eqref{eq.update2_tz} involve the multiplication of a scalar $\rho$ on each entry of the vector. Therefore, the complexity is only $\calO(NW)$ for both. For \eqref{eq.update2_z}, the complexity mainly depends on the matrix inversion whose complexity is $\calO(N^3W^3)$. Since the variables are updated for at most $K$ iterations and the computations are executed in parallel by the $C$ caches, the overall complexity of the CoSR-AA algorithm is $\calO(KN^3W^3)$.


\section{Deep Unfolding of the CoSR-AA Algorithm} \label{sec.deep_unfolding}

While the ADMM-based COSR-AA algorithm proposed in the previous section enables collaborative sparse recovery of sensor data at different caches, the number of iterations required to arrive at a consensus solution may be large. To reduce the computation time, in this section we propose a distributed recovery algorithm based on the deep unfolding \cite{yang_sun_li_xu_2017} of the proposed CoSR-AA algorithm, named Deep CoSR-AA. The key idea is to treat the operations of each ADMM iteration as a separate layer in a deep neural network and utilize the end-to-end training of their parameters to enable reliable recovery at the output, even when using a small number of layers (or iterations).

Specifically, the proposed deep learning model takes the form of a graph neural network where, in iteration $k$, the local network at cache $c$ takes as input the previous local multiplier variables $\blambda_c^{[k-1]}$, $\{\bmu_{c,c'}^{[k-1]}\}_{c'\in\calD_c}$, $\bxi_c^{[k-1]}$, the reconstructed sparse vectors $\bz_c^{[k-1]}$ and $\tbz_c^{[k-1]}$, the messages $\{\bu_{c',c}^{[k-1]}\}_{c'\in\calD_c}$ received from its neighboring caches, and the messages $\{\bu_{c,c'}^{[k-1]}\}_{c'\in\calD_c}$ computed locally in the previous iteration, and produces as output the updated variables $\blambda_c^{[k]}$, $\{\bmu_{c,c'}^{[k]}\}_{c'\in\calD_c}$, $\bxi_c^{[k]}$, $\bz_c^{[k]}$, $\tbz_c^{[k]}$ and the updated messages $\{\bu_{c,c'}^{[k]}\}_{c'\in\calD_c}$. Notice that $\{\bu_{c,c'}^{[k]}\}_{c'\in\calD_c}$ are messages that are sent to neighboring caches to facilitate collaboration and ensure consistency among multiple caches. In the original CoSR-AA algorithm, the message sent from cache $c'$ to cache $c$ takes on the form $\bu_{c',c}^{[k-1]}=\bGamma_{c',c}\bPsi\bz_{c'}^{[k-1]}$, which corresponds to the reconstructed observations of the anchor nodes chosen by caches $c'$ and $c$. The exchange of the reconstructed anchor node observations $\bu_{c',c}^{[k-1]}$, instead of the full observation vector $\bx_{c'}^{[k-1]}=\bPsi\bz_{c'}^{[k-1]}$, reduces the messaging overhead between caches $c'$ and $c$ since the dimension of $\bu_{c',c}^{[k-1]}$ is much smaller than $\bx_{c'}^{[k-1]}$. By adopting a deep learning framework, we can generalize the dimension reduction by utilizing an encoder network to better extract the necessary features that can be exchanged with other caches.
The proposed graph neural network model consists of a multiplier update subnetwork, an aggregation subnetwork, a reconstruction subnetwork, and a message embedding layer at each node. Details of the neural network architecture are shown in Fig. \ref{fig:nn_structure_stage_k} and described below.

\begin{figure}[t]
    \centering
    \includegraphics[width=1.0\linewidth]{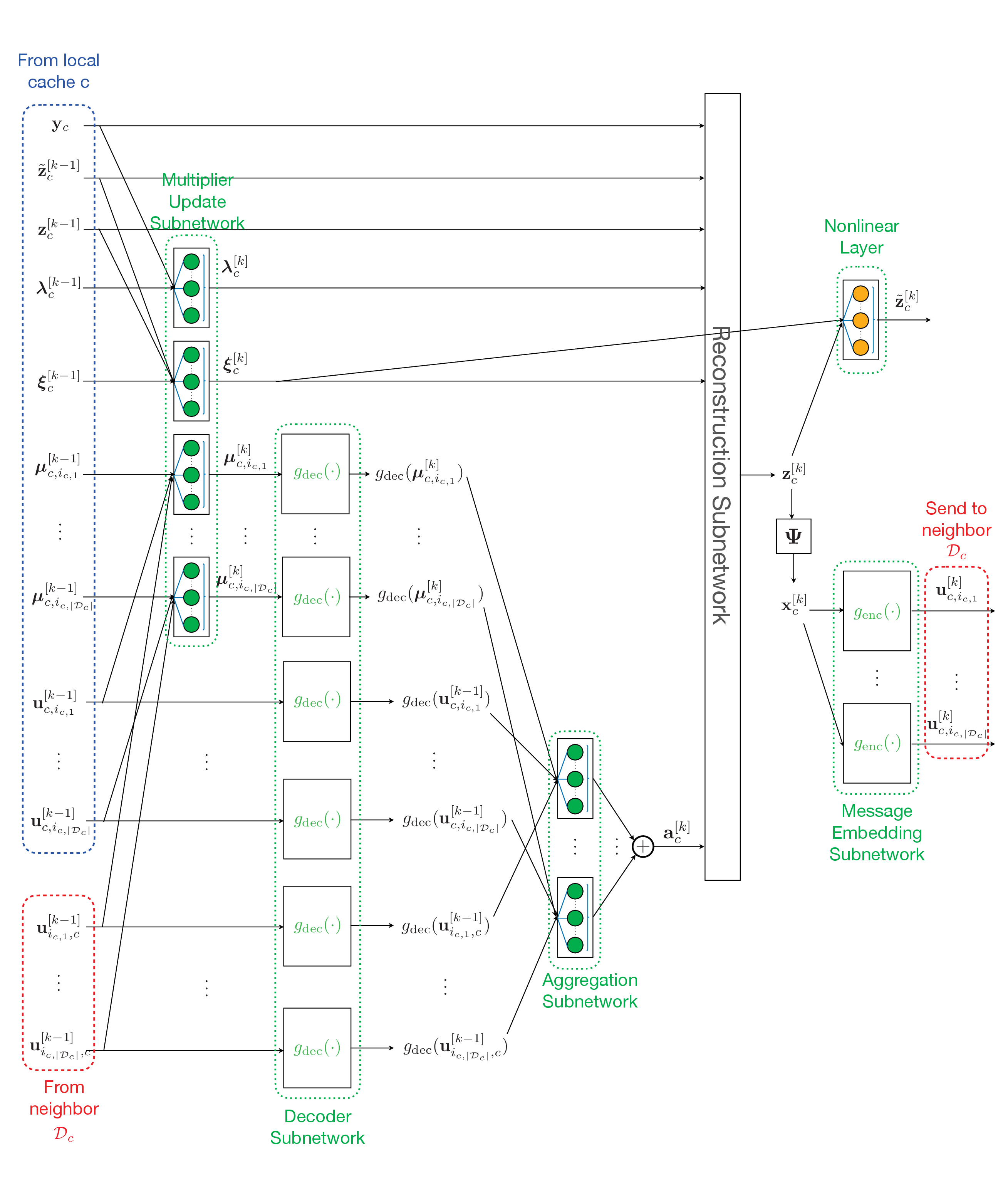}
    \caption{The neural network architecture at cache $c$ in stage $k$ of the Deep CoSR-AA algorithm. The green circle and the orange circle denote the neuron without and with the activation function of the neural network, respectively.}
    \label{fig:nn_structure_stage_k}
\end{figure}

\textbf{Multiplier Update Subnetwork:} Following \eqref{eq.update2_lambda}, \eqref{eq.update2_mu}, 
and \eqref{eq.update2_xi}, the multiplier update subnetwork takes the previous multiplier variables $\blambda_c^{[k-1]}$, $\{\bmu_{c,c'}^{[k-1]}\}_{c'\in\calD_c}$, $\bxi_c^{[k-1]}$, reconstructed vectors $\bz_c^{[k-1]}$, $\tbz_c^{[k-1]}$, and messages $\{\bu_{c,c'}^{[k-1]}\}_{c'\in\calD_c}$ and $\{\bu_{c',c}^{[k-1]}\}_{c'\in\calD_c}$ as input and updates the multiplier variables by passing them through a linear layer with operations
given by 
\begin{subequations} \label{eq.nn_structure_dual_variable}
\begin{align}
    \blambda_c^{[k]} =& \bW_{\blambda_c}^{[k]}\Big[    \blambda_c^{{[k-1]}^{T}}, \big(\bPhi_c\bPsi\bz_c^{[k-1]}\big)^{\!T}, \by_c^T \Big]^T \!\!+ \bb_{\blambda_c}^{[k]},\label{eq.nn_lambda}\\
    \bmu_{c,c'}^{[k]} =& \bW_{\bmu_{c,c'}}^{[k]}\Big[
        \bmu_{c,c'}^{{[k-1]}^{T}}, \bu_{c,c'}^{{[k-1]}^{T}}, \bu_{c',c}^{{[k-1]}^{T}}\Big]^T\!\!+ \bb_{\bmu_{c,c'}}^{[k]}, \notag\\
    &\hspace{5cm}~\forall c'\in\calD_c,\label{eq.nn_mu}\\
    \bxi_c^{[k]} =& \bW_{\bxi_c}^{[k]}\Big[
        \bxi_c^{{[k-1]}^T}, \bz_c^{{[k-1]}^T}, \tbz_c^{{[k-1]}^{\substack{T\\[1.25ex]}}}
    \Big]^T \!\!+ \bb_{\bxi_c}^{[k]},\label{eq.nn_xi}
\end{align}
\end{subequations}
where weight matrices $\bW_{\blambda_c}^{[k]}$, $\bW_{\bmu_{c,c'}}^{[k]}$, $\bW_{\bxi_c}^{[k]}$ and biases $\bb_{\blambda_c}^{[k]}$, $\bb_{\bmu_{c,c'}}^{[k]}$, $\bb_{\bxi_c}^{[k]}$ are learnable variables in neural subnetwork.
Notice that \eqref{eq.update2_lambda}, \eqref{eq.update2_mu}, 
and \eqref{eq.update2_xi} are special cases of \eqref{eq.nn_lambda}, \eqref{eq.nn_mu}, and \eqref{eq.nn_xi} with the weight matrices chosen as $\bW_{\blambda_c}^{[k]}=\begin{bmatrix}
    {\bf I}_{M_cW}, \rho{\bf I}_{ M_cW}, -\rho{\bf I}_{ M_cW}
\end{bmatrix}$, $\bW_{\bmu_{c,c'}}^{[k]}=\begin{bmatrix}
    {\bf I}_{|\calQ_{c,c'}|W},(\rho/2){\bf I}_{|\calQ_{c,c'}|W},-(\rho/2){\bf I}_{|\calQ_{c,c'}|W}
\end{bmatrix}$ and $\bW_{\bxi_c}^{[k]}=\begin{bmatrix}
    {\bf I}_{NW},\rho{\bf I}_{NW},-\rho{\bf I}_{NW}
\end{bmatrix}$ and the biases being zero.

{\bf Aggregation Subnetwork:} In the aggregation subnetwork, the low-dimensional messages sent to and received from all caches in $\calD_c$, i.e.,  $\{\bu_{c,c'}^{[k-1]}\}_{c'\in\calD_c}$ and $\{\bu_{c',c}^{[k-1]}\}_{c'\in\calD_c}$, and the updated multiplier variables $\{\bmu_{c,c'}^{[k]}\}_{c'\in\calD_c}$, each with dimension $|\calQ_{c,c'}|W$, are first mapped back to the ambient dimension $NW$. Inside \eqref{eq.update2_z}, this is done by pre-multiplying the above variables with the matrix $\bGamma_{c,c'}^T$. However, by adopting a deep learning framework, this can be done by a general decoder network $g_{\rm dec}(\cdot)$ to produce the $NW$-dimensional vectors $\{g_{\rm dec}(\bu_{c,c'}^{[k-1]})\}_{c'\in\calD_c}$, $\{g_{\rm dec}(\bu_{c',c}^{[k-1]})\}_{c'\in\calD_c}$, and $\{g_{\rm dec}(\bmu_{c,c'}^{[k]})\}_{c'\in\calD_c}$. The decoded information associated with each $c'$ is aggregated by the function $g_{\rm agg}(\cdot)$ and the aggregated outputs of all neighboring caches are pooled together to yield the subnetwork output 
\begin{equation}
\ba_c^{[k]} = \sum_{c'\in\calD_c} g_{\rm agg}\left(g_{\rm dec}\big(\bu_{c,c'}^{[k-1]}\big), g_{\rm dec}\big(\bu_{c',c}^{[k-1]}\big),g_{\rm dec}\big(\bmu_{c,c'}^{[k]}\big)\right) .\label{eq.nn_agg}
\end{equation}
We shall evaluate the effectiveness of both decoders in the simulations section.

\begin{figure*}[t]
    \centering
    \includegraphics[width=\textwidth,height=\textheight,keepaspectratio]{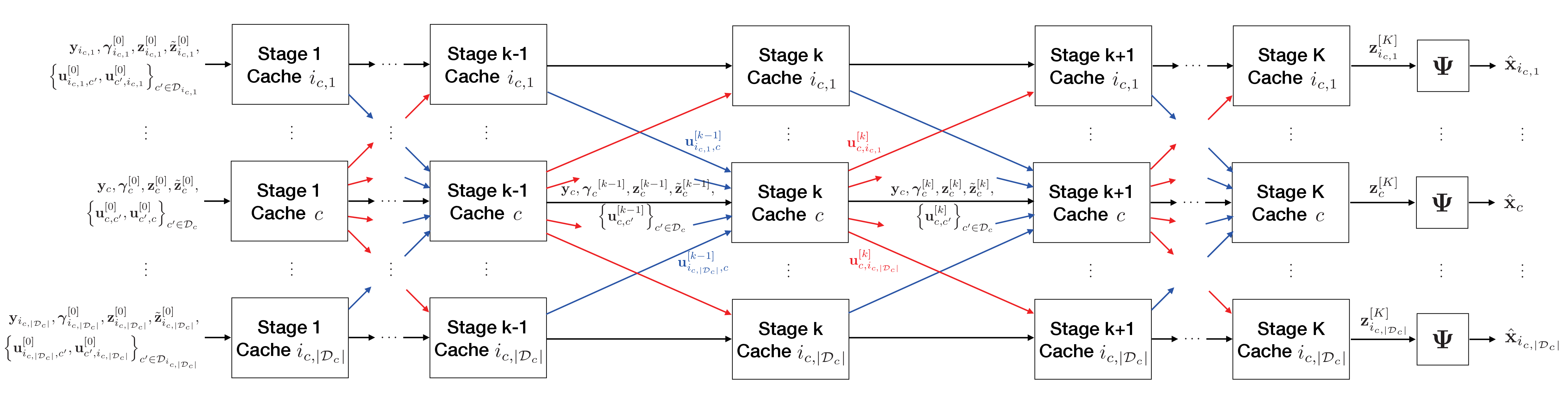}
    \caption{Illustration of the messaging between caches over different stages of the Deep CoSR-AA algorithm. The red arrows indicate the flow of variables sent outward from cache c to its neighboring caches in $\calD_c$. The blue arrows indicate the flow of variables coming inward to cache $c$ from all caches $\calD_c$.}
    \label{fig:nn_sturcture}
\end{figure*}

\textbf{Reconstruction Subnetwork:} The reconstruction subnetwork aims to reconstruct the local sparse vectors $\bz_c^{[k]}$ and $\tbz_c^{[k]}$. Following \eqref{eq.update2_z}, the reconstruction of $\bz_c^{[k]}$ in the subnetwork is implemented by a single linear layer which takes as input the aggregation subnetwork output $\ba_c^{[k]}$, the local observation $\by_c$, the updated multipliers $\blambda_c^{[k]}$ and $\bxi_c^{[k]}$, and the sparse vector estimate $\tilde \bz_c^{[k-1]}$ from the previous iteration to yield the updated reconstruction of the sparse vector
\begin{align}
    \bz_c^{[k]} =& \bW_{\bz_c}^{[k]}\begin{bmatrix}
        \by_c^T, \blambda_c^{{[k]}^{\substack{T\\[0.5ex]}}}, \tbz_c^{{[k-1]}^{\substack{T\\[1.5ex]}}}, \bxi_c^{{[k]}^{\substack{T\\[0.5ex]}}},{\bf a}_{c}^{{[k]}^{\substack{T\\[0.25ex]}}}
\end{bmatrix}^T\!+\bb_{\bz_c}^{[k]},\label{eq.nn_z}
\end{align}
where weight matrices $\bW_{\bz_c}^{[k]}$ and bias $\bb_{\bz_c}^{[k]}$ are learnable variables in neural network. Then, following \eqref{eq.update2_tz}, the updated reconstruction $\bz_c^{[k]}$ is then passed through a nonlinear layer to produce the auxiliary reconstruction vector
\begin{align}
    \tbz_c^{[k]} = \sigma\left(\bW_{\tbz_c}^{[k]}\begin{bmatrix}
       {\bxi_c^{[k]}}^{\substack{T\\[0.5ex]}}, {\bz_c^{[k]}}^T
    \end{bmatrix}^T + \bb_{\bxi_c}^{[k]}\right),\label{eq.nn_tz}
\end{align}
where the weight matrix $\bW_{\tbz_c}^{[k]}$ and bias $\bb_{\tbz_c}^{[k]}$ are learnable variables and $\sigma(\cdot)$ represents the nonlinear activation function. Recall that $\tbz_c^{[k]}$ is an auxiliary variable defined to account for the $\ell_{1}$-norm minimization in \eqref{eq.admm}. 


\textbf{Message Embedding Subnetwork:} The message embedding subnetwork produces the message $\bu_{c,c'}^{[k]}$ that cache $c$ exchanges with cache $c'$ in each stage (say, stage $k$). The message $\bu_{c,c'}^{[k]}$ corresponds to estimates of the observations at the anchor nodes, i.e., $\bGamma_{c,c'}\bPsi\bz_c^{[k]}$, which reduces the dimension from $NW$ to $|\calQ_{c,c'}|W$. However, by adopting a deep learning framework, it is possible to consider a more general encoder network $g_{\rm enc}(\cdot)$ to extract the necessary features for exchange with neighboring caches. 
In particular, the message embedding subnetwork can be expressed as
\begin{align}
   \bu_{c,c'}^{[k]} = g_{\rm enc}\left(\bPsi\bz_c^{[k]}\right), \forall c'\in\calD_c,\label{eq.nn_u}
\end{align}
where the low-dimensional output $\bu_{c,c'}^{[k]}$ is sent from cache $c$ to cache $c'\in\calD_c$. Notice that a common encoder $g_{\rm enc}$ is used to encode messages destined for different neighbors. In general, different encoders can be used, but this
may increase the model complexity and limit the scalability of the proposed method.
In the simulations, we shall consider two choices of $g_{\rm enc}$: 
(i) CoSR-AA-Linear (CoSR-AA-LI): the case where $g_{\rm enc}$ is a simple linear layer, i.e., $g_{\rm enc}(\bx)=\bW_{\rm enc}\bx+{\bf b}_{\rm enc}$, and (ii) CoSR-AA-Autoencoder (CoSR-AA-AE): the case where $g_{\rm enc}$ is a general neural network with nonlinear activation. In the latter case, $g_{\rm enc}$ can be trained together with $g_{\rm dec}$ in the aggregation subnetwork to form an autoencoder-like network.

The subnetworks described above form the computations at cach $c$ in stage $k$. The messaging between caches from one stage to the next is illustrated in Fig. \ref{fig:nn_sturcture}.
Here, $\by_c, \bgamma_c^{[0]}, \tbz_c^{[0]}$ and $\bz_c^{[0]}$ are the initial input values. 
In stage $k$, cache $c$ takes as input both the local output at stage $k-1$ and the messages $\bu_{c',c}^{[k-1]}$ received from its neighbor $c'\in\calD_c$. 
It then produces at the output $\bgamma_c^{[k]}$, $\tbz_c^{[k]}$, $\bz_c^{[k]}$ and $\{\bu_{c,c'}^{[k]}\}_{c'\in\calD_c}$, which are passed to the next stage.
The messages $\{\bu_{c,c'}^{[k]}\}_{c'\in\calD_c}$ are also sent to its neighboring caches in $\calD_c$. 
After $K'$ stages, the output $\bz_c^{[K']}$ is multiplied by $\bPsi$ to obtain the reconstructed sensor data $\hat{\bx}_c$.

The deep learning model described above is trained end-to-end using a multi-stage loss function that considers the sum of the mean square error (MSE) loss of all stages. This is often adopted in the literature, e.g., \cite{ma_jiang_liu_ma_2022}, to avoid the gradient vanishing problem. 
The loss function is defined as 
\begin{align} \label{eq.loss_function}
    {\rm Loss} =& \sum_{c=1}^{C}\frac{1}{|{\cal E}|}\sum_{i\in{\cal E}}\sum_{k=1}^{K}\Big\|\hat{\bx}_c^{[k]}(i)-\bx(i)\Big\|_2^2+ \epsilon \cdot {\rm reg}({\cal G}),
\end{align}
where ${\cal E}$ is the training data set and $\cal G$ is the set of all model parameters. Here, ${\rm reg}({\cal G})$ represents the $\ell_2$-regularization of the model parameters in $\cal G$.


{\bf Complexity Analysis.} Here, we discuss the computational complexity required by the Deep CoSR-AA algorithm in terms of the number of multiplications. 
Notice that, in the proposed neural network model, the number of multiplications scales with the input dimension which is $\calO(N^2W^2)$ (see, e.g., 
\eqref{eq.nn_z}) and avoids
the need for the matrix inversion in \eqref{eq.update2_z}, which originally requires $\calO(N^3W^3)$ .
Suppose that $K'$ stages are adopted in the Deep CoSR-AA method. In this case, the overall complexity scales only as $\calO(K'N^2W^2)$, which is significantly smaller than that of the CoSR-AA algorithm. More importantly, the required number of stages $K'$ in the Deep CoSR-AA method is much smaller than the number of iterations $K$ required in the original CoSR-AA algorithm (by at least an order of magnitude, as demonstrated in our experiments).
More discussions are given in Section \ref{sec.simulations}. 

\section{Simulation Results} \label{sec.simulations} 


In this section, numerical simulations are provided to demonstrate the effectiveness of the proposed collaborative sensor caching and sparse data recovery algorithms. 
In our experiments, unless mentioned otherwise, we deploy $N=100$ sensors and $C=4$ caches over a $100\sqrt{N} \times 100\sqrt{N}$ square region. The sensor field is divided into $N$ blocks, each with area $100 \times 100$, and a sensor is deployed randomly according to the uniform distribution in each block. The sensor field is also equally divided into $C=4$ square subregions with a cache deployed at the center of each subregion. Each cache can only access data from sensors located within its subregion, and thus the subregion is referred to as the cache's coverage. 

Following \cite{leinonen2015sequential}, we generate each sensor's observation by taking an aggregate of $S = 10$ underlying Gaussian sources distributed randomly according to the uniform distribution throughout the field. Hence, sensor $n$'s observation at time $t$ can be written as $x_n(t)=\sum_{s=1}^{S} e^{-(d_{n,s}/\eta_1)^{2}}\beta_s(t)$, 
where 
$\eta_1>0$ is the correlation length, $d_{n,s}$ is the distance between sensor $n$ and source $s$, and $\beta_s(t)$ is the value of the $s$-th Gaussian source at time $t$. Here, we set $\eta_1=800$ unless otherwise specified.
Moreover, to incorporate temporal correlation, we choose the source sequence $\boldsymbol{\beta}_s = [\beta_s(1),\ldots, \beta_s(T)]^T$ to be the sum of a smooth component $\boldsymbol{\lambda}_s = [\lambda_s(1),\ldots, \lambda_s(T)]^T$ and a non-smooth component $\boldsymbol{\pi}_s = [\pi_s(1),\ldots, \pi_s(T)]^T$, i.e., $\boldsymbol{\beta}_s = \boldsymbol{\lambda}_s+\boldsymbol{\pi}_s$ \cite{leinonen2015sequential}. The sequence $\blambda_s$ is generated by a Gauss-Markov process where $\lambda_s(t) = \alpha_s(\lambda_s(t-1)-\mu_s) + \sqrt{1-\alpha_s^2}\nu_s(t) + \mu_s$ with $\alpha_s=0.9$ being the correlation coefficient and $\nu_s(t)\sim\calN(0, 1)$ being a Gaussian innovation term that is i.i.d. over time. 
The mean $\mu_s$ is chosen as $\lambda_s(1)$. The sequence $\boldsymbol{\lambda}_s$ is further smoothened by preserving only the low-pass components. The non-smooth component $\boldsymbol{\pi}_s$ is generated by a Markov chain with state space
${\cal V}_s = \{V_{s,1},\ldots, V_{s,|{\cal V}_s|}\}$. Here, we choose $|{\cal V}_s|=10$, for all $s$, and let $V_{s,j}\sim\calN(0, 1)$, for all $j$. The self-transition probability of each state is $0.8$ and the transition probability to any other state is $0.2/|{\cal V}_s|$.
Further details on the sensor data generation can be found in \cite{leinonen2015sequential}.
Moreover, following \cite{duarte2012kronecker} and \cite{leinonen2015sequential}, we choose  $\bPsi_S\in\mathbb{R}^{N\times N}$ in the spatial domain to be a 2D-DCT matrix 
and $\bPsi_T\in\mathbb{R}^{W\times W}$ in the temporal domain to be 
an inverse DCT matrix. The number of measurements is fixed over time and the same for all caches, i.e., $M_c = M$ for all $c$. The number of anchor nodes chosen by different pairs of caches is also assumed to be the same, i.e., $|\calQ_{c,c'}| = Q$, $\forall c'\in\calD_c, c\in\calC$, and the window size is $W=4$. The quality of sensor data recovery is evaluated using the normalized mean square error (NMSE), defined as
\begin{align} \label{eq.WinDataNMSE}
    {\rm NMSE} = \frac{1}{C(T-W+1)} \sum_{c=1}^{C} \sum_{t=W}^{T} \frac{\left\|\hat{\bX}_c(t) - \bX(t)\right\|_F^2}{\left\|\bX(t)\right\|_F^2},
\end{align}
where $\hat{\bX}_c(t)$ is the estimated recovery of $\bX(t)$ at cache $c$. All experimental results are averaged over $10$ random deployment scenarios. 

\subsection{Experiments for the Proposed CoSR-AA Algorithm} \label{subsec.prop_CoSR-AA}

\begin{figure}[t]
    \centering
    \includegraphics[width=1.0\linewidth]{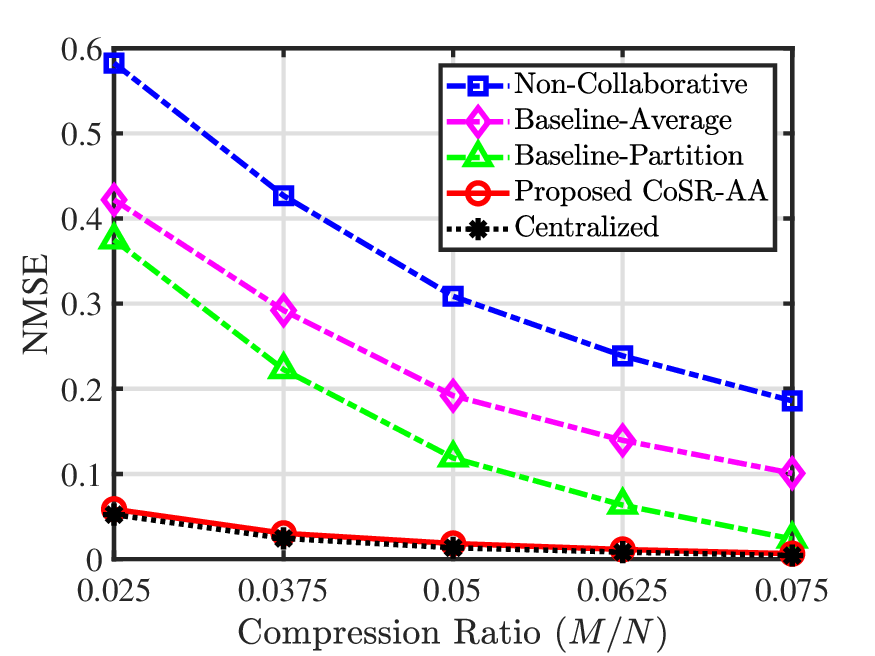}
    \caption{NMSE versus the compression ratio $M/N$ for the case with $C = 4$.}
    \label{fig:Co_vs_NoCo_C=4_W=4}
    \vspace{-.2cm}
\end{figure}

We first examine the effectiveness of the proposed ADMM-based CoSR-AA algorithm. The experiments are conducted in Matlab. 
The maximum number of iterations is $K=3000$. The convergence thresholds in \eqref{eq.stopping_criterion} are chosen as $\epsilon_{\rm pri} = 0.004$ and $\epsilon_{\rm dual} = 1.5$, and the parameters used in \eqref{eq.penalty_parameter} for adaptability are $\tau = 2$ and $\eta = 10$. The initial value of the penalty parameter is given by $\rho^{[0]} = 10$.  In addition to the ideal centralized CS-based recovery method in \eqref{eq.centralized_reconstruct}, we also compare with the following three baseline distributed sparse recovery methods:
\begin{itemize}
    \item Non-Collaborative Recovery: In this method, the data associated with the entire sensor field are recovered separately at each cache by solving 
    the problem in \eqref{eq.distributed_reconstruct_with_AN} without the pairwise consensus constraints in \eqref{eq.anchor_constraint}. Without cooperation among caches, the recovery may suffer from insufficient measurements stored at each cache.

    \item Baseline-Average: This method takes the average of the non-collaborative solutions obtained at all caches following the method mentioned above. 
    That is, we have $\Bar{\bX}(t) = \frac{1}{C}\sum_{c=1}^{C}\hat{\bX}_c(t)$.  
    Note that this method requires the caches to exchange the entire local reconstructed sensor field with each other.  
    
    \item Baseline-Partition: This method is also based on the non-collaborative recovery method but, instead of taking the average of the local solutions, we take from each cache only the recovered data associated with its coverage and patch them together to form an estimate of the entire sensor field.
    That is, we take from cache $c$ the recovered data associated with the sensors in $\calN_c$.

\end{itemize}

In Fig. \ref{fig:Co_vs_NoCo_C=4_W=4}, we show the NMSE of all methods with respect to different values of the compression ratio $M/N$. Here, we set the number of anchors as $Q=25$. The anchor nodes associated with each pair of caches are randomly selected from the union of their coverage regions. 
As expected, the NMSE of all methods decreases as the $M/N$ increases since more observations will be available at the caches for data recovery. We can also see that
all the collaborative reconstruction schemes have lower NMSE as compared to the non-collaborative recovery method. In fact, the proposed CoSR-AA approach outperforms all baseline recovery
methods and can maintain good recovery performance even at a low compression ratio (e.g., $ M/N<0.0375$). This is because our proposed scheme exploits anchor alignment to facilitate a consensus solution among caches within the network, enabling low data reconstruction error at local caches even with only a small set of measurements. 
In fact, the proposed CoSR-AA can achieve an NMSE that is close to that of the centralized solution, even though the consensus constraints are limited to only anchor nodes rather than all sensors in the field.


\begin{figure}[t]
    \centering
    \includegraphics[width=1.0\linewidth]{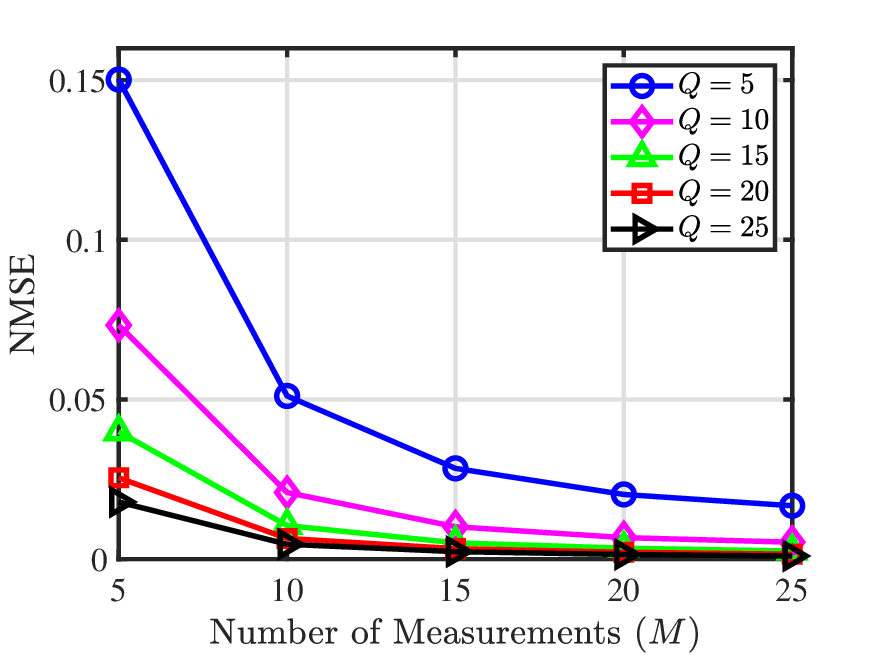}
    \caption{NMSE versus the number of measurements $M$ for different values of $Q$.}
    \label{fig:sample_node_vs_anchor_node}
    \vspace{-.2cm}
\end{figure}

To further investigate the impact of the number of sample nodes $M$ and the number of anchor nodes $Q$ on the data reconstruction performance,
we show in Fig. \ref{fig:sample_node_vs_anchor_node} the NMSE of the proposed CoSR-AA scheme with respect to $M$, for cases with $Q=5$, $10$, $15$, $20$, and $25$.   
We can see that the data reconstruction error decreases with $M$ and $Q$ under all schemes. 
More interestingly, we observe that a similar NMSE can be achieved between cases with $(M,Q) = (5,25)$ and $(M,Q) = (25,5)$ (and also between cases $(M,Q) = (10,10)$ and $(M,Q) = (20,5)$). This implies that, even with only a few local measurements, each cache can still achieve stable data recovery, by aligning their reconstruction at a sufficient number of anchor nodes, enabling each cache to better utilize the measurements available at other caches.
However, if the number of local measurements is already sufficient to achieve good recovery performance, then only a few anchor nodes will be required to ensure consistency among neighboring caches.     

\begin{figure}[t]
    \centering
    \includegraphics[width=1.0\linewidth]{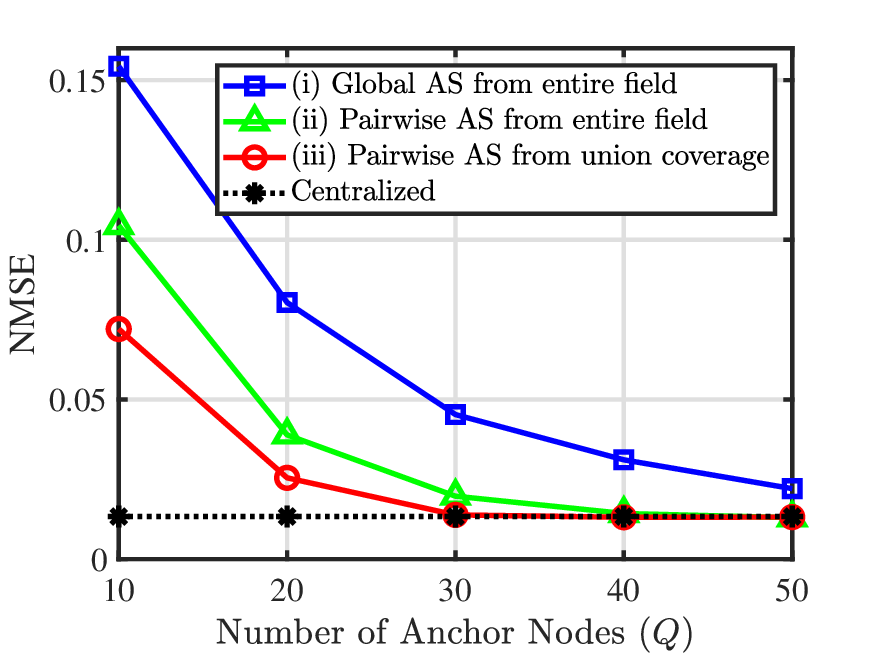}
    \caption{NMSE versus the number of anchor nodes $Q$ for different anchor selection strategies.}
    \label{fig:AN_selection}
    \vspace{-.2cm}
\end{figure}



Next, we examine the performance of the proposed CoSR-AA algorithm with different anchor selection (AS) strategies. In this experiment, $C=4$ caches are deployed in the sensing field with disjoint coverage areas (thus, $|\calN_c|=N/C=25,\forall c\in\calC$). We consider three possible AS strategies: (i) global AS where anchor nodes are selected randomly from the entire sensor field and the same nodes are used by all pairs of caches; 
(ii) pairwise AS with the anchor nodes are again selected from the entire sensor field but are chosen separately for each pair of caches; 
(iii) pairwise AS with anchor nodes chosen only from the union of their coverages. 

In Fig. \ref{fig:AN_selection}, we show the NMSE of the proposed CoSR-AA algorithm with respect to the number of anchor nodes $Q$ for the different AS strategies mentioned above. The centralized solution is also shown for reference. We can see that the NMSE decreases with the number of anchor nodes in all cases. However, the pairwise AS strategies in (ii) and (iii) are able to achieve much lower NMSEs compared to the global AS strategy in (i) since pairwise AS strategies enjoy more diversity and thus avoid errors caused by a few unreliable anchor nodes. We can also see that (iii) outperforms (ii) since the recovery of data associated with sensors outside of their coverages may be unreliable and, thus, may not be suitable 
serving as anchor nodes. 
With the aid of anchor alignment, we can further observe from Fig. \ref{fig:AN_selection} that the proposed anchor selection schemes (ii) and (iii) can achieve the same NMSE as the centralized solution, provided that the number of anchor node is large enough such that $Q\geq 40$ and $Q\geq 30$, respectively. This result confirms that the proposed anchor alignment schemes can indeed yield reliable data recovery while significantly reducing the communication overhead.

\begin{figure}[t]
    \centering
    \includegraphics[width=1.0\linewidth]{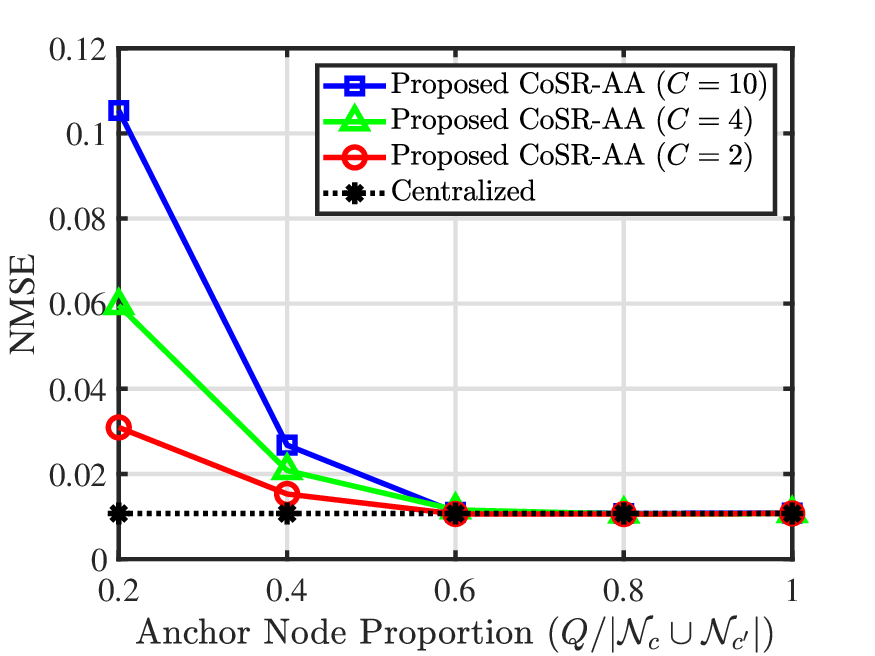}
    \caption{NMSE versus the anchor node proportion for cases with $C=2$, $4$, and $10$.} 
    \label{fig:distributed_vs_centralized}
    \vspace{-.2cm}
\end{figure}

In Fig. \ref{fig:distributed_vs_centralized}, we consider the pairwise AS strategy in (iii) and further examine the effectiveness of the proposed CoSR-AA algorithm for varying proportions of anchor nodes with each pairwise union coverage, i.e., $Q/|\calN_c \cup \calN_{c'}|$. 
In this experiment, the number of caches is set as $C=2$, $4$, and $10$, and the total number of measurements is $M_{\rm tot} = 20$. Each cache collaborates with every other cache. We again choose the centralized solution as our baseline scheme. We can see that the NMSE decreases as the proportion of anchor nodes increases in all cases, even performing as well as the centralized case when $Q/|\calN_c \cup \calN_{c'}| \geq 0.6$. 
This is expected since the measurement size ($M=2$) in the case $C=10$ is less than that in other cases (e.g., $M=10$ for $C=2$ and $M=5$ for $C=4$), implying that more anchor nodes should be employed and more messages should be exchanged to better exploit collaboration among caches. 
For a fixed proportion of anchor nodes, namely, $Q/|\calN_c\cup\calN_{c'}| = 0.2$, the case with $C=2$ 
yields the best performance since the number of measurements (i.e., $M_{\rm tot}/C=10$) available at each local cache is already sufficient to ensure reliable recovery.

\subsection{Experiments for the Proposed Deep CoSR-AA Algorithm} \label{subsec.prop_deep_CoSR-AA}

Next, we examine the effectiveness of the proposed Deep CoSR-AA algorithm using PyTorch. 
The data set is produced by considering $400$ sensor deployment scenarios with each producing data over $125$ time instants. For each deployment, $80$ data points are used for training, $20$ for validation, and $25$ for testing within each deployment. The total number of training data, validation data, and testing data is $32000$, $8000$, and $10000$, respectively. We shuffle the training data to prevent the occurrence of any bias during the training process. We choose Adam for the optimizer, Leaky ReLU for the activation function, and set batch size $|{\cal E}| = 500$. The regularization parameter is set as $\epsilon = 0.001$. For each case, the model is trained over $500$ epochs. The learning rate for Deep CoSR-AA-LI is set as $10^{-4}$ for the first $300$ epochs and subsequently adjusted to $10^{-5}$ for the remaining $200$ epochs. The learning rate for Deep CoSR-AA-AE is instead fixed as $5\times 10^{-4}$. We compare our approach with 
ADMM-CSNet \cite{yang_sun_li_xu_2020},
where only the $\rho$'s in \eqref{eq.update2_lambda}, \eqref{eq.update2_mu}, \eqref{eq.update2_xi} and \eqref{eq.update2_z} were treated as training parameters and a piecewise linear function was used to approximate the nonlinear equation in \eqref{eq.update2_tz}.

\begin{figure}[t]
    \centering
    \includegraphics[width=1.0\linewidth]{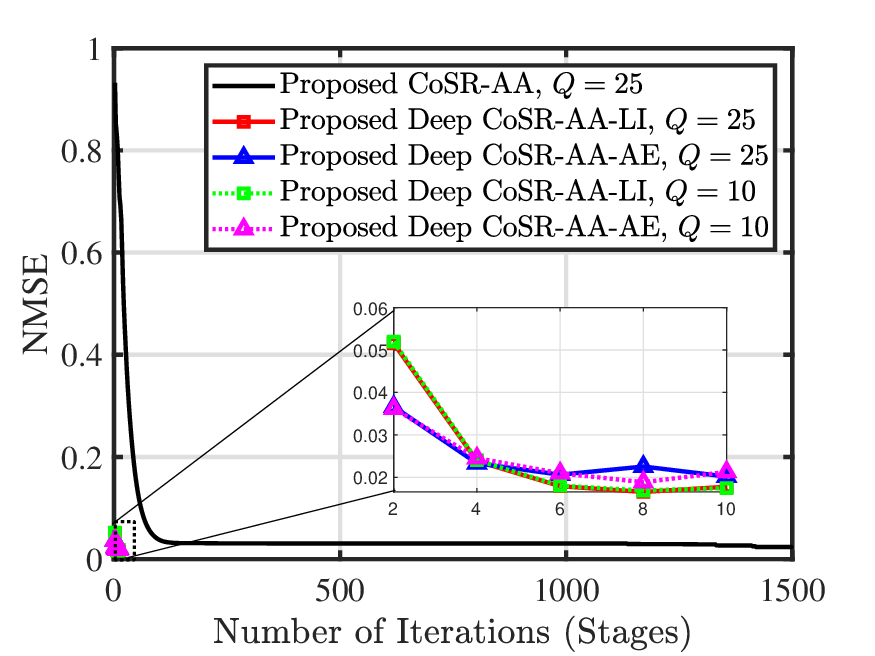}
    \caption{NMSE versus the number of iterations (or stages) for the CoSR-AA and the Deep CoSR-AA methods.}
    \label{fig:admm_vs_unfolding}
    \vspace{-.2cm}
\end{figure}

In this experiment, we set $C=2$ and $M=10$. The same multiplier update subnetwork, aggregation subnetwork, reconstruction subnetwork and message embedding subnetwork are utilized for all caches.
As mentioned in Section \ref{sec.deep_unfolding}, we consider two choices of $g_{\rm enc}$: (i) CoSR-AA-Linear (CoSR-AA-LI), where $g_{\rm enc}$ is a simple linear layer and (ii) CoSR-AA-Autoencoder (CoSR-AA-AE), where $g_{\rm enc}$ is a general neural network with nonlinear activation. In CoSR-AA-AE, the encoder subnetwork $g_{\rm enc}$ contains two nonlinear layers with leaky ReLU as the activation functions plus a linear layer to expand the output into the desired dimension $QW$. The output dimensions of the nonlinear layers are $250$ and $150$, respectively. The decoder subnetwork $g_{\rm dec}$ contains two nonlinear layers plus a linear layer whose output dimensions are $250$, $350$ and $NW$, respectively.

In Fig. \ref{fig:admm_vs_unfolding}, we compare the NMSE of the proposed CoSR-AA and Deep CoSR-AA algorithms with respect to the number of iterations (or stages), for cases with $Q=25$ and $10$. We can see the NMSE decreases as the number of iterations (or stages) increases for all schemes. 
The proposed Deep CoSR-AA schemes are able to achieve low NMSE even with only a few stages ($<10$) whereas the original CoSR-AA algorithm may require a large number of iterations ($\approx 1500$) to converge.

\begin{figure}[t]
    \centering
    \includegraphics[width=1.0\linewidth]{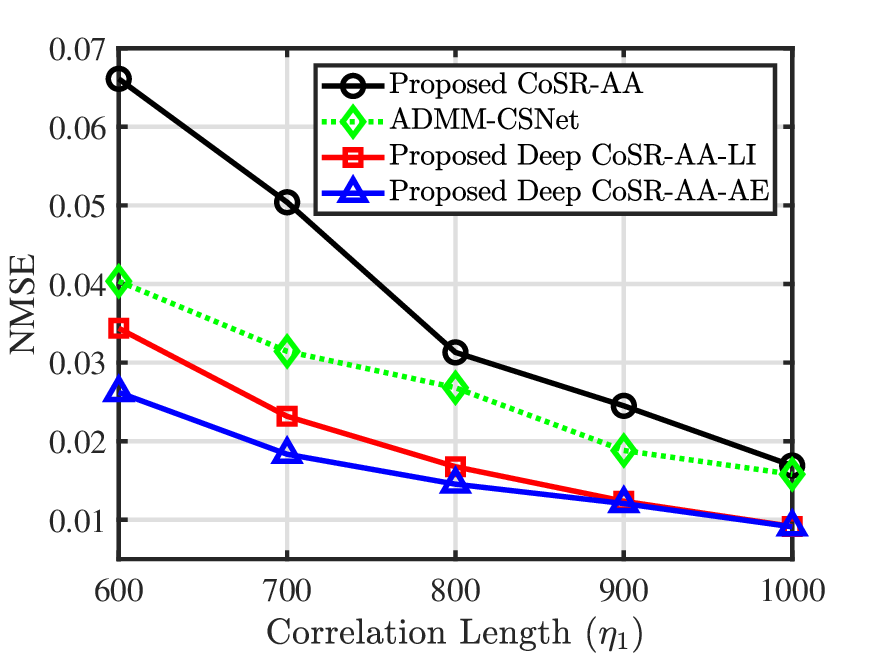}
    \caption{NMSE versus the correlation length for the case where $C=2$ and $M=10$.}
    \label{fig:different_unfolding}
    \vspace{-.2cm}
\end{figure}

In Fig. \ref{fig:different_unfolding}, we show the NMSE versus the correlation length $\eta_1$. A larger $\eta_1$ indicates higher spatial correlation among sensors. We set the number of stages $K'$ to $10$ since it was shown to be sufficient in the above figure. We can see that the proposed Deep CoSR-AA schemes outperform both the original CoSR-AA scheme and the ADMM-CSNet \cite{yang_sun_li_xu_2020}, regardless of the correlation length. This shows that the proposed deep unfolding approach is able to achieve
better performance with significantly lower computational complexity. 

\section{Conclusion}\label{sec.conclusion}

In this work, we proposed a compressed sensor caching framework that takes advantage of the high spatio-temporal correlation of sensor observations to improve caching efficiency and reduce communication cost. Here, each cache is only able to store a limited number of observations sampled from sensors within its coverage. To improve the sparse data recovery at the local caches, we first proposed the CoSR-AA algorithm to enable collaborative data recovery among caches. The proposed method is based on the consensus ADMM algorithm but further exploits the sparsity of the underlying signal to reduce the number of message exchanges between caches. This is done by aligning the reconstructed observations of only a few anchor nodes. To further reduce computation time, we then proposed a learning-based Deep CoSR-AA algorithm based on the deep unfolding of the ADMM iterations. We obtained a graph neural network structure with an embedded autoencoder that is able to significantly reduce the number of iterations (or stages). Finally, simulation results were provided to demonstrate the effectiveness of the proposed schemes against other baseline algorithms.

\bibliographystyle{IEEEbib}
\bibliography{IEEEabrv,Ref_SensorCachingCS}
\end{document}